\newtheorem{lemma}{Lemma}
\begin{document}

%
\title{An Unsupervised Machine Learning Scheme for Index-Based CSI Feedback in Wi-Fi
\thanks{This research is supported by the National Science Foundation (NSF) through the award number CNS 1814727.

Mrugen Deshmukh was with InterDigital, Inc., New York, NY 10120 USA. He is now with the Department of Electrical and Computer Engineering, North Carolina State University, Raleigh, NC 27606 USA (email: madeshmu@ncsu.edu).

Zinan Lin, Hanqing Lou, and Rui Yang are with InterDigital, Inc., New York, NY 10120 USA (email: Zinan.Lin@InterDigital.com, Hanqing.Lou@InterDigital.com, Rui.Yang@InterDigital.com).

Mahmoud Kamel is with InterDigital, Inc., Montreal, QC H3A 3G4 Canada (email: Mahmoud.Kamel@InterDigital.com).

\.{I}smail G{\"{u}}ven{\c{c}} is with the Department of Electrical and Computer Engineering, North Carolina State University, Raleigh, NC 27606 USA (email: iguvenc@ncsu.edu).}}

\author{
	\IEEEauthorblockN{Mrugen Deshmukh, Zinan Lin, Hanqing Lou, Mahmoud Kamel, Rui Yang and \.{I}smail G{\"{u}}ven{\c{c}}}	
}

\maketitle

\begin{abstract}
With the ever-increasing demand for high-speed wireless data transmission, beamforming techniques have been proven to be crucial in improving the data rate and the signal-to-noise ratio (SNR) at the receiver. However, they require feedback mechanisms that need an overhead of information and increase the system complexity, potentially challenging the efficiency and capacity of modern wireless networks. This paper investigates novel index-based feedback mechanisms that aim at reducing the beamforming feedback overhead in Wi-Fi links. The proposed methods mitigate the overhead by generating a set of candidate beamforming vectors using an unsupervised learning-based framework. The amount of feedback information required is thus reduced by using the index of the candidate as feedback instead of transmitting the entire beamforming matrix. We explore several methods that consider different representations of the data in the candidate set. In particular, we propose five different ways to generate and represent the candidate sets that consider the covariance matrices of the channel, serialize the feedback matrix, and account for the effective distance, among others. Additionally, we also discuss the implications of using partial information in the compressed beamforming feedback on the link performance and compare it with the newly proposed index-based methods. Extensive IEEE 802.11 standard-compliant simulation results show that the proposed methods effectively minimize the feedback overhead, enhancing the throughput while maintaining an adequate link performance. 
\end{abstract}

\begin{IEEEkeywords}
Artificial intelligence, Compressed beamforming feedback, IEEE 802.11be, \textit{k}-means clustering, MIMO, Wi-Fi 7.
\end{IEEEkeywords}

\section{Introduction} \label{sec:intro}

Wi-Fi technology has become an essential part of our daily lives, providing wireless Internet connectivity for a wide range of devices and applications and enabling seamless Internet access. According to a recent report by the Wi-Fi Alliance \cite{telecom2018}, the economic value provided by Wi-Fi reached \$3.3 trillion in 2021 and is expected to reach \$4.9 trillion by 2025. The number of Wi-Fi hotspots is anticipated to grow four times from approximately 169 million in 2018 to approximately 628 million in 2023 \cite{cisco_report}. As the reliance on Wi-Fi networks continues to grow, the demand for higher data rates, improved reliability, and enhanced network capacity has become increasingly critical. Future Wi-Fi generations are expected to keep meeting these demands and deliver a superior experience. 

Beamforming technology has emerged as a key enabler for achieving improved link performance and spectral efficiency in Wi-Fi systems. By utilizing multiple antennas at the transmitter, beamforming technology allows for an improved signal-to-noise ratio (SNR) at the receiver. This significantly improves the error-rate performance of a link. To fully utilize the benefits of beamforming, acquiring accurate channel state information (CSI) at the transmitter is critical. CSI provides knowledge about the characteristics of the wireless channel that may be affected by fading, multi-path, path loss, and interference. By using the CSI feedback, the transmitter can adapt the beamforming weights and optimize the reception of signals, improving the link quality and subsequently the data rate performance. Traditional channel sounding procedures used to obtain CSI involve transmitting pilot symbols known both to the transmitter as well as the receiver and estimating the channel response at the receiver side. 

As the number of antennas continues to increase to satisfy demand for higher data rates, the traditional channel sounding procedures face several challenges. The overhead associated with acquiring accurate CSI feedback escalates with the growing antenna count leading to increased complexity and longer training times, which may hamper the throughput gain. This poses significant challenges for real-time applications that require low latency and efficient feedback mechanisms. To address the challenges posed by traditional channel sounding procedures in MIMO-based Wi-Fi systems, literature has been exploring novel approaches to reduce beamforming feedback overhead, which we will overview in Section~\ref{sec:lit_review}. One promising avenue of research that has been popular in recent literature is the application of neural network (NN) based machine learning (ML) approaches for CSI feedback reduction in both 3GPP and Wi-Fi systems. NNs have demonstrated remarkable abilities in learning complex patterns within large data sets, and by training NNs on empirical CSI data, they can effectively estimate the CSI for a given received symbol, reducing the reliance on extensive traditional channel sounding procedures \cite{8322184}. This approach has been shown to allow faster and more efficient acquisition of CSI information, contributing to a reduction in beamforming feedback overhead. 

In our previous work \cite{9860553}, we presented a novel ML-aided approach where we use an index-based approach for beamforming feedback in Wi-Fi links. In particular, the existing Wi-Fi standard uses two sets of angles, $\phi$, and $\psi$ to represent a compressed version of the beamforming matrix. We obtained a candidate set by using \textit{k}-means clustering on a large empirical data set of vectors containing quantized representations of $\phi$ and $\psi$ angles. We demonstrated how using an index-based approach can reduce the amount of information required in beamforming feedback and subsequently result in a significant gain in the goodput. 

\begin{table*}[t]
  \centering
  \caption{ML-based beamforming feedback overhead reduction in literature.}
  \label{tab:MLreview}
  \begin{tabular}{|p{0.075\linewidth}|p{0.075\linewidth}|p{0.12\linewidth}|p{0.35\linewidth}|p{0.15\linewidth}|p{0.075\linewidth}|}
    \hline
    
    \multirow{2}{*}{\textbf{Ref.}} & \multirow{2}{*}{\textbf{Standard}}  & \multirow{2}{*}{\textbf{ML Architecture}}& \multirow{2}{*}{\textbf{Feedback Approach}} & \multirow{2}{*}{\textbf{Metric}} & \multirow{2}{*}{\textbf{Complexity}}\\
    &  &  &  &  & \\
    \hline
    This work & WLAN  & K-means & Index of the closest candidate. Effect of the representation of data in each candidate discussed. & PER, Goodput & Low \\
    \hline
    \cite{9860553} & WLAN  & K-means & Index of the closest candidate. Each candidate represented using quantized $\phi$ and $\psi$ angle indexes. & PER, Goodput & Low \\
    \hline    
    \cite{samsung_csi} & WLAN & K-means & Two steps of feedback, one for the entire sub-band and another for each subcarrier using a combination of a codebook and Givens rotation representation. & PER, Throughput & Medium \\ 
    \hline
    \cite{9448323} & WLAN & DNN & DNN based compression operating jointly with a DNN based resource allocation algorithm. & EVM, Throughput & High \\  
    \hline
    \cite{9259366}  & WLAN & Autoencoder & Compress the $\phi$ and $\psi$ angles separately using autoencoders. & EVM, Throughput & High \\ 
    \hline
    \cite{9860400} & Cellular & Deep RL & Data driven precoder without relying on fixed channel probability distribution combined with an unsupervised learning based codebook.  & BER & High \\ 
    \hline
    \cite{8322184} & Cellular & Convolutional NN & Seminal paper that introduced NN based CSI compression for cellular networks (CsiNet). & NMSE, GCS & High \\ 
    \hline
    \cite{9373670} & Cellular & Binary NN & Using BNNs to improve performance and save memory compared to CsiNet \cite{8322184}. & NMSE, Validation loss & High \\ 
    \hline
    \cite{9797871} & Cellular & Convolutional NN & Dilated convolution based encoders and decoders to compress CSI. & Complexity, NMSE & High \\ 
    \hline
    \cite{8482358} & Cellular & LSTM & Use LSTM combined with CsiNet \cite{8322184} to improve CSI recovery. & NMSE, GCS, Run time & High \\ 
    \hline
    \cite{9768327} & Cellular & Convolutional NN & Compress CSI by removing informative redundancy in terms of the mutual information. & NMSE, BER & High \\   
    \hline
    \cite{9279228} & Cellular & Autoencoder & Compression with the aim of improving the beamforming performance gain instead of improving the CSI feedback accuracy. & Data rate & High \\ 
    \hline
    \cite{9662381} & Cellular & Autoencoder & Use NN based architecture to replace implicit feedback methods and improve link-level performance. & GCS, Complexity & High \\ 
    \hline
    \cite{9705497} & Cellular & Transformer & Use transformer architecture to improve the quality of the compressed CSI. & NMSE, Complexity & High \\ 
    \hline
    \cite{8885897} & Cellular & LSTM, Transformer & Use LSTM architecture in the CSI encoding procedure and transformer in the decoding procedure to improve accuracy of reported CSI. & NMSE, GCS & High \\ 
    \hline
  \end{tabular}
\end{table*}

In this paper, which significantly extends the contribution in \cite{9860553}, we explore several different methodologies to generate and represent the candidate sets. In particular, we propose five different approaches and their respective methodologies to generate candidate sets, which are based on: 1) Lower quantization for $\psi$ angles in compressed beamforming, 2) Separate clustering of $\phi$ and $\psi$ angles in compressed beamforming, 3) Serialized beamforming steering matrices, 4) Covariance matrices of the corresponding channel matrices and 5) Accounting for \textit{effective} distance between the angles in iFOR \cite{9860553}. We demonstrate how the actual representation of the empirical data set may affect the clustering and generation of the candidates. We then perform an extensive performance comparison of these methodologies using simulation results and discuss the related trade-offs, where the highest improvement in goodput is shown to be approximately $54\%$ at high SNR. We also discuss the utility of transmitting partial compressed beamforming feedback, specifically in cases of multiple spatial streams being utilized to transmit data. We demonstrate that using this relatively simple approach may lead to approximately $17\%$ gain in the goodput at high SNRs.  
    
The rest of the paper is organized as follows. Section~\ref{sec:lit_review} covers the literature survey and discusses the related work. In Section~\ref{sec:sys_model}, we describe the system model that we use in our simulations and performance evaluation. In Section~\ref{sec:index_fb}, we describe the proposed methods that use index-based feedback, while Section~\ref{sec:fixed_psi} discusses an additional method to reduce the beamforming feedback using only partial information from the beamforming method currently used in the 802.11be standard. In Section~\ref{sec:sim_results} we present the performance evaluation of the proposed methods and the corresponding trade-offs. In Section~\ref{sec:conclusion} we conclude the paper and discuss the possible future work.

\section{Literature Review} \label{sec:lit_review}

Our literature review is directed toward the works that discuss channel sounding and beamforming overhead reduction in the Wi-Fi and cellular networks. Initially, we delve into some conventional methodologies, which provide a fundamental understanding and context of the overhead reduction problem. Following this, we turn our focus toward more contemporary approaches that use ML-based tools. Specifically, we concentrate on the recent developments in using NN-based architectures to address the CSI overhead reduction problem.

\subsection{Traditional Approaches}

The problem of overhead reduction has been in discussion in Wi-Fi since the induction of MIMO. Traditionally, several works have addressed this issue by optimizing channel sounding parameters. Works such as \cite{7442585, 6990336, 8013252, nabetani2018novel, 8516958} explore methodologies where the frequency of channel sounding is reduced assuming the quasi-static nature of the Wi-Fi channels. In \cite{6666172}, reducing overhead for multi-user MIMO (MU-MIMO) systems is discussed considering implicit as well as explicit feedback methods, whereas \cite{7809634} focuses on implicit feedback methods. Works in \cite{6328529, 8464287, 7354785} study the relationship between the different channel sounding parameters and the system throughput for MU-MIMO. 

\subsection{Machine Learning Based Approaches}

There has been a surge of interest from the research community and industry to explore the use of ML-based tools to solve problems in the wireless communications domain \cite{9786784}. In Table~\ref{tab:MLreview}, we review some of the papers in the recent literature that use ML-based solutions to address the CSI feedback reduction problem in wireless networks. For each reference listed in Table~\ref{tab:MLreview}, we discuss the wireless standard the paper focuses on, the basic ML architecture being used, the performance metrics being used to evaluate the proposed methods, and the complexity of implementation of the proposed methods in a practical wireless network.

The abbreviations in Table~\ref{tab:MLreview} are as follows. DNN stands for deep neural network, RL is reinforcement learning, LSTM is the long short-term memory architecture, EVM is error magnitude vector at the receiver, BER is the bit error rate, and the generalized cosine similarity (GCS) and the normalized mean squared error (NMSE) are the intermediate key performance indicators (KPIs) that are used to measure the accuracy of the beamforming feedback generated by a proposed scheme. 

ML-based solutions for CSI feedback reduction have been extensively studied in the literature for cellular networks and remain an interesting topic for the community as the field of ML evolves. We refer the interested readers to the relevant surveys, e.g., \cite{9931713} and \cite{9896861}.

\begin{figure}[t] 
    \centering\vspace{-3mm}
    \includegraphics[width=0.99\linewidth,trim={1cm 3cm 1cm 2.5cm},clip]{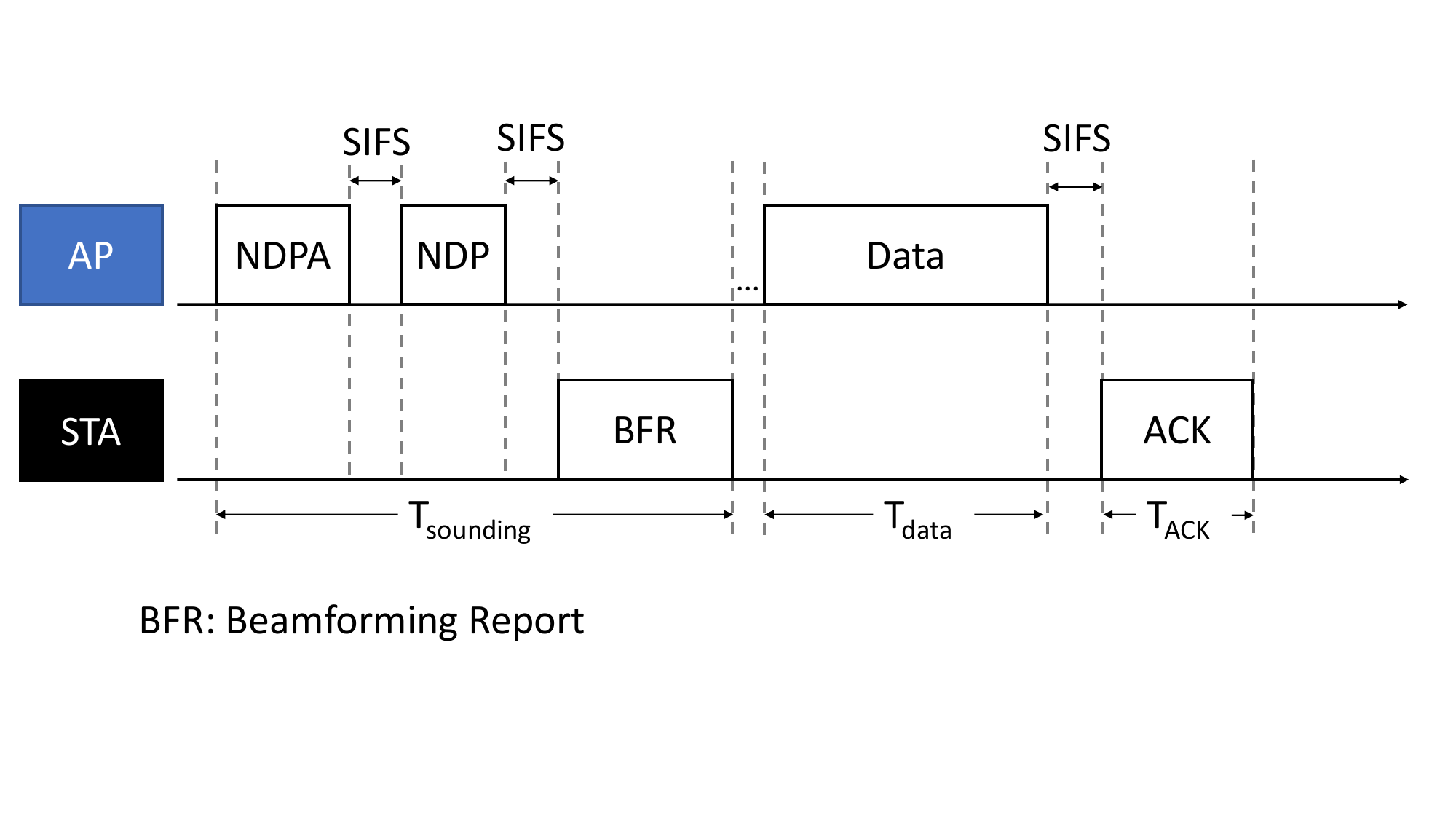}\vspace{-4mm}
    \caption{SU-MIMO downlink transmission procedure.}
    \label{fig:channel_sounding}\vspace{-4mm}
\end{figure}

\section{System Model} \label{sec:sys_model}

In this work, we consider a single-user MIMO (SU-MIMO) Wi-Fi link operating in downlink using transmit beamforming (TxBF). TxBF enables improvement in the link performance using weights that are applied to the transmitted signal at each antenna to improve the SNR at the receiver. These so-called ``weights" are adopted from the knowledge of the propagation environment, i.e., the CSI. 

\subsection{SU-MIMO Downlink Transmission in Wi-Fi}

Mathematical representation of a general MIMO system using TxBF can be given by \cite{stacey_book}
\begin{equation}\label{eq:downlink_Y}
    \mathbf{y} = \sqrt{\frac{\rho}{N_\text{r}}}\mathbf{H}\hat{\mathbf{V}}\mathbf{x} + \mathbf{z},
\end{equation}
where, the received signal $\mathbf{y}$ is a column vector of size $N_\text{t}\times 1$, the transmitted signal $\mathbf{x}$ is a column vector of size $N_\text{c}\times 1$, $\mathbf{H}$ is the MIMO channel matrix of size $N_\text{t}\times N_\text{r}$ and the AWGN $\mathbf{z}$ is a vector of size $N_\text{t}\times 1$, and $\rho$ is the SNR. The dimensions of these matrices are as follows: $N_\text{r}$ is the number of antennas at the beamformer, $N_\text{t}$ is the number of antennas at the beamformee, and $N_\text{c}$ is the number of spatial streams being transmitted. Here, we assume $N_\text{c} = N_\text{t}$.

The matrix $\hat{\mathbf{V}}$ is referred to as the steering matrix or the precoding matrix and is obtained from the singular value decomposition of $\mathbf{H}$. For efficient utilization of TxBF, the weights in the matrix $\hat{\mathbf{V}}$ need to be determined to the highest accuracy possible before data transmission begins. 

In Wi-Fi, a channel sounding mechanism is typically used to determine the CSI, from which the matrix $\hat{\mathbf{V}}$ is obtained. Fig.~\ref{fig:channel_sounding} describes an example SU-MIMO downlink transmission process in Wi-Fi, including the channel sounding mechanism, between an access point (AP), the beamformer, and a non-AP station (STA), the beamformee.

\begin{figure}[t] 
    \centering\vspace{-3mm}
    \includegraphics[width=0.99\linewidth,trim={3.25cm 7cm 3.25cm 7cm},clip]{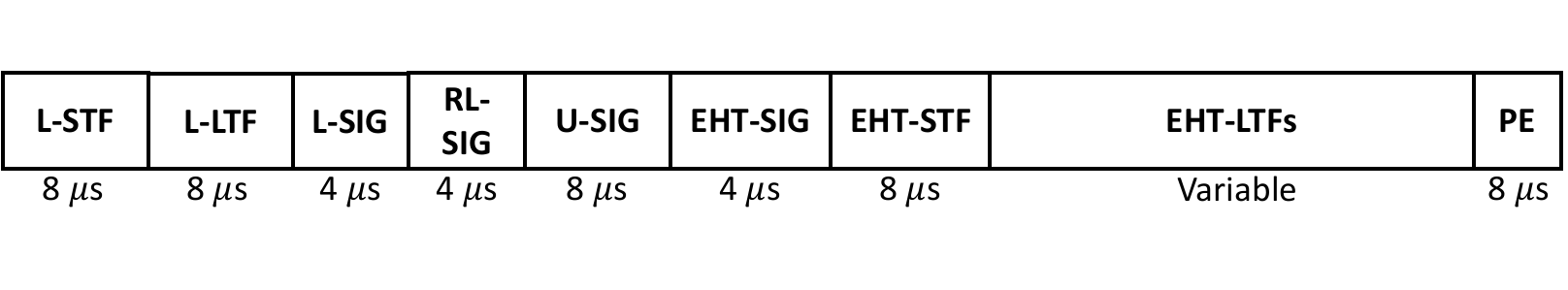}\vspace{-4mm}
    \caption{Sounding NDP format for 802.11be.}
    \label{fig:ppdu_format}\vspace{-4mm}
\end{figure}

The entire downlink transmission process can be broadly divided into the following three phases: 
\begin{itemize}
	\item Channel sounding: The channel sounding phase begins with the AP transmitting a null data packet announcement (NDPA) frame, which contains relevant information such as the address of the beamformee, sounding channel bandwidth, feedback type, etc. After a short inter-frame space (SIFS) interval, the AP transmits a null data packet (NDP). This NDP follows the same format of a multi-user physical layer protocol data unit (MU-PPDU), except the data field is excluded. The NDP format for Wi-Fi 7, or IEEE 802.11be is shown in Fig.~\ref{fig:ppdu_format} \cite{11bedraft}. The non-AP STA then uses the extremely high throughput (EHT) long training fields (LTFs) in this PPDU to estimate the CSI and determine $\mathbf{V}$. After another SIFS interval, the STA transmits the information pertaining to $\mathbf{V}$ in the beamforming report (BFR) in the uplink.
	\item Data transmission: The AP can now use the BFR to reconstruct the steering matrix $\hat{\mathbf{V}}$. This $\hat{\mathbf{V}}$ matrix is then used to transmit the data in downlink as shown in (\ref{eq:downlink_Y}). In most practical scenarios, the reconstructed $\hat{\mathbf{V}}$ is not the same as the $\mathbf{V}$ calculated at the STA during channel sounding. This loss in accuracy may lead to performance degradation of the link. 
	\item ACK transmission: After a SIFS of the Data frame, the STA sends an ACK frame if the transmitted data frame is successfully received.  
\end{itemize}

There are several ways to represent the $\mathbf{V}$ matrix information, which will be briefly discussed in the next subsection. 

\subsection{Compressed Beamforming Feedback in Wi-Fi}
The beamforming feedback in Wi-Fi systems can be broadly categorized into two types: implicit feedback and explicit feedback \cite{stacey_book}. Subsequently, explicit feedback can be further divided into three types: CSI feedback, non-compressed beamforming weights feedback, and compressed beamforming weights feedback. In this work, we focus on the compressed beamforming type of feedback, which is typically used in Wi-Fi as it enables to reduce the feedback overhead. 

During the channel sounding procedure, the beamformee computes $\mathbf{V}$ from the Singular Value Decomposition (SVD) of the estimated channel matrix ($\mathbf{H}$). This procedure needs to be repeated for every active subcarrier and the corresponding data is transmitted to the beamformer in the BFR in uplink. Feeding back quantized values of all the complex elements in $\mathbf{V}$ for every active subcarrier, however, may lead to significant overhead in the BFR. In the compressed beamforming feedback method, the beamformee will compress and approximate the $\mathbf{V}$ matrix in the form of a function of angles \cite{stacey_book}.

The compressed representation of $\mathbf{V}$ can be mathematically expressed as
\begin{equation}
    \mathbf{V} = \Bigg(\prod_{{i}=1}^{\text{min}(N_\text{c}, N_\text{r}-1)} \Bigg[ \mathbf{D}_{i}\prod_{{l}={i}+1}^{N_r} \mathbf{G}^T_{li}(\psi_{l,i})\Bigg] \times \Tilde{ \mathbf{I}}_{N_\text{r}\times N_\text{c}}\Bigg),
\end{equation}
where $\Tilde{ \mathbf{I}}_{N_\text{r}\times N_\text{c}}$is an identity matrix padded with $0$s to fill the additional rows or columns when $N_\text{r}\neq N_\text{c}$. Unless mentioned otherwise, we assume $N_\text{c}=N_\text{t}$. The matrix $\mathbf{V}$ is of the dimensions $N_\text{r}\times N_\text{c}$ \cite{stacey_book}. In this paper, we consider a non-trigger-based (TB) channel sounding sequence during which $N_\text{r}$ and $N_\text{c}$ are determined by the beamformee. 

The matrix $\mathbf{G}_{li} (\psi)$ is a Givens rotation matrix of dimensions $N_\text{r}\times N_\text{r}$ as shown below
\begin{equation}\small
    \mathbf{G}_{li}(\psi) = \begin{bmatrix} 
        \mathbf{I}_{i-1} & 0 & 0 & 0 & 0 \\ 
        0 & \cos(\psi) & 0 & \sin(\psi) & 0 \\ 
        0 & 0 & \mathbf{I}_{l-i-1} & 0 & 0 \\ 
        0 & -\sin(\psi) & 0 & \cos(\psi) & 0 \\ 
        0 & 0 & 0 & 0 & \mathbf{I}_{N_\text{r}-l} 
        \end{bmatrix},
\label{eq:g_psi_matrix}
\end{equation}
and the matrix $\mathbf{D}_i$ is a diagonal matrix also of the dimensions $N_\text{r}\times N_\text{r}$ and is represented as 
\begin{equation}
    \mathbf{D}_i = \begin{bmatrix} 
        \mathbf{I}_{i-1} & 0 & \hdots & \hdots & 0 \\ 
        0 & e^{j\phi_{i,i}} & 0 & \hdots & 0 \\ 
        \vdots & 0 & \ddots & 0 & \vdots \\ 
        \vdots & \hdots & 0 & e^{j\phi_{N_\text{r}-1, i}} & 0 \\ 
        0 & \hdots & \hdots & 0 & 1
        \end{bmatrix}
\label{eq:d_phi_matrix}
\end{equation} 
where each $\mathbf{I}_\text{m}$ is an $\text{m}\times \text{m}$ identity matrix. When the beamformee is requested to send the BFR, it will actually report a vector containing the indices of quantized values of the $\phi$ and $\psi$ angles, which are used to reconstruct the $\mathbf{V}$ matrix by the beamformer. 

\begin{table}[t]
  \centering
  \caption{Order of angles in the BFR using compressed beamforming}
  \label{tab:fb_vec_examples}
  \begin{tabular}{|p{0.12\linewidth}|p{0.1\linewidth}|p{0.6\linewidth}|}
    \hline
    $N_\text{r}\times N_\text{c}$ & \textbf{No. of angles} & \textbf{Angle report vector} \\
    \hline
    $2\times 1$ & $2$ & $\phi_{1,1}$, $\psi_{2,1}$ \\
    \hline
    $2\times 2$ & $2$ & $\phi_{1,1}$, $\psi_{2,1}$ \\
    \hline
    $4\times 1$ & $6$ & $\phi_{1,1}$, $\phi_{2,1}$, $\phi_{3,1}$, $\psi_{2,1}$, $\psi_{3,1}$, $\psi_{4,1}$ \\
    \hline
    $4\times 2$ & $10$ & $\phi_{1,1}$, $\phi_{2,1}$, $\phi_{3,1}$, $\psi_{2,1}$, $\psi_{3,1}$, $\psi_{4,1}$, $\phi_{2,2}$, $\phi_{3,2}$, $\psi_{3,2}$, $\psi_{4,2}$ \\
    \hline
  \end{tabular}
\end{table}

Thus, the $\mathbf{V}$ matrix may be completely represented using the $\mathbf{\Phi}=\{\phi_{l,i}\}, l\in\{i+1, ... , N_r\}$ and the $\mathbf{\Psi}=\{\psi_{i,i}\}, i\in\{1, ... , \text{min}(N_\text{c},N_\text{r}-1)\}$ angles. When generating the BFR, the beamformee will generate a \textit{vector of indices} of the quantized values of the $\mathbf{\Phi}$ and  $\mathbf{\Psi}$ angles. The angles in $\mathbf{\Phi}$ can be quantized as 
\begin{equation}\label{eq:phi_bf}
    \phi_{l,i}^q = \pi\Bigg( \frac{1}{2^{b_{\phi}}} + \frac{q}{2^{b_{\phi}-1}} \Bigg),~ q=0,1, ... , 2^{b_{\phi}}-1,
\end{equation}
and the angles in $\mathbf{\Psi}$ can be quantized as
\begin{equation}\label{eq:psi_bf}
    \psi_{i,i}^q = \pi\Bigg( \frac{1}{2^{b_{\psi}+2}} + \frac{q}{2^{b_{\psi}+1}} \Bigg),~ q=0,1, ... , 2^{b_{\psi}}-1,
\end{equation}
where $b_{\phi}$ and $b_{\psi}$ are the number of bits used to represent the $\mathbf{\Phi}$ and $\mathbf{\Psi}$ angles respectively. In 802.11be, $b_{\phi}$ and $b_{\psi}$ are indicated by the NDP Announcement frame sent from the beamformer. The length of the feedback vector with quantized indices depends on $N_\text{c}$ and $N_\text{r}$. Table~\ref{tab:fb_vec_examples} shows examples of such vectors for different MIMO settings.

Using compressed beamforming, an approximate form of the $\mathbf{V}$ matrix is generated. The accuracy of this approximation depends on the quantization order chosen in (\ref{eq:phi_bf}) and (\ref{eq:psi_bf}), with a higher quantization level leading to higher accuracy. Hence $b_{\phi}$ and $b_{\psi}$ are important parameters, as the loss in accuracy of $\mathbf{V}$ will subsequently affect the link performance. Using an index-based approach to beamforming feedback will lead to similar challenges. The higher the size of the candidate set, the higher the expected accuracy of the steering matrix used in data transmission leading to better link performance. 

\section{Index-Based Beamforming Feedback Generation} \label{sec:index_fb}

In this section, we describe the proposed index-based methods. The flow of the index-based beamforming feedback is expressed in Fig.~\ref{fig:candidate_selection}. As described in Section~\ref{sec:sys_model}, the beamformee (STA) uses the LTF symbols in the NDP to estimate the channel ($\mathbf{H}$). Subsequently, the steering matrix ($\mathbf{V}$) is then determined from the SVD of $\mathbf{H}$. In the compressed beamforming representation, this matrix $\mathbf{V}$ is then represented in terms of angles and the indices of the quantized values of these angles are sent in the feedback.

In the index-based methods, however, after the $\mathbf{V}$ matrix is computed it is compared to a set of pre-determined feedback matrices, which we refer to as the candidate set. This candidate set is assumed to be available at both the beamformer and the beamformee. Then, using a pre-defined distance metric, e.g. squared Euclidean distance (SED), the closest candidate to the $\mathbf{V}$ matrix is determined. The index of this closest candidate is then sent in the beamforming feedback instead of sending the complete information about the $\mathbf{V}$ matrix. Since the beamformer (AP) also has the same candidate set, it can use the received index from the STA to pick the candidate steering matrix ($\hat{\mathbf{V}}$). The AP then uses this $\hat{\mathbf{V}}$ matrix to transmit the data in the downlink.

\begin{figure}[t] 
    \centering\vspace{-3mm}
    \includegraphics[width=0.99\linewidth,trim={5.5cm 3.5cm 5cm 4cm},clip]{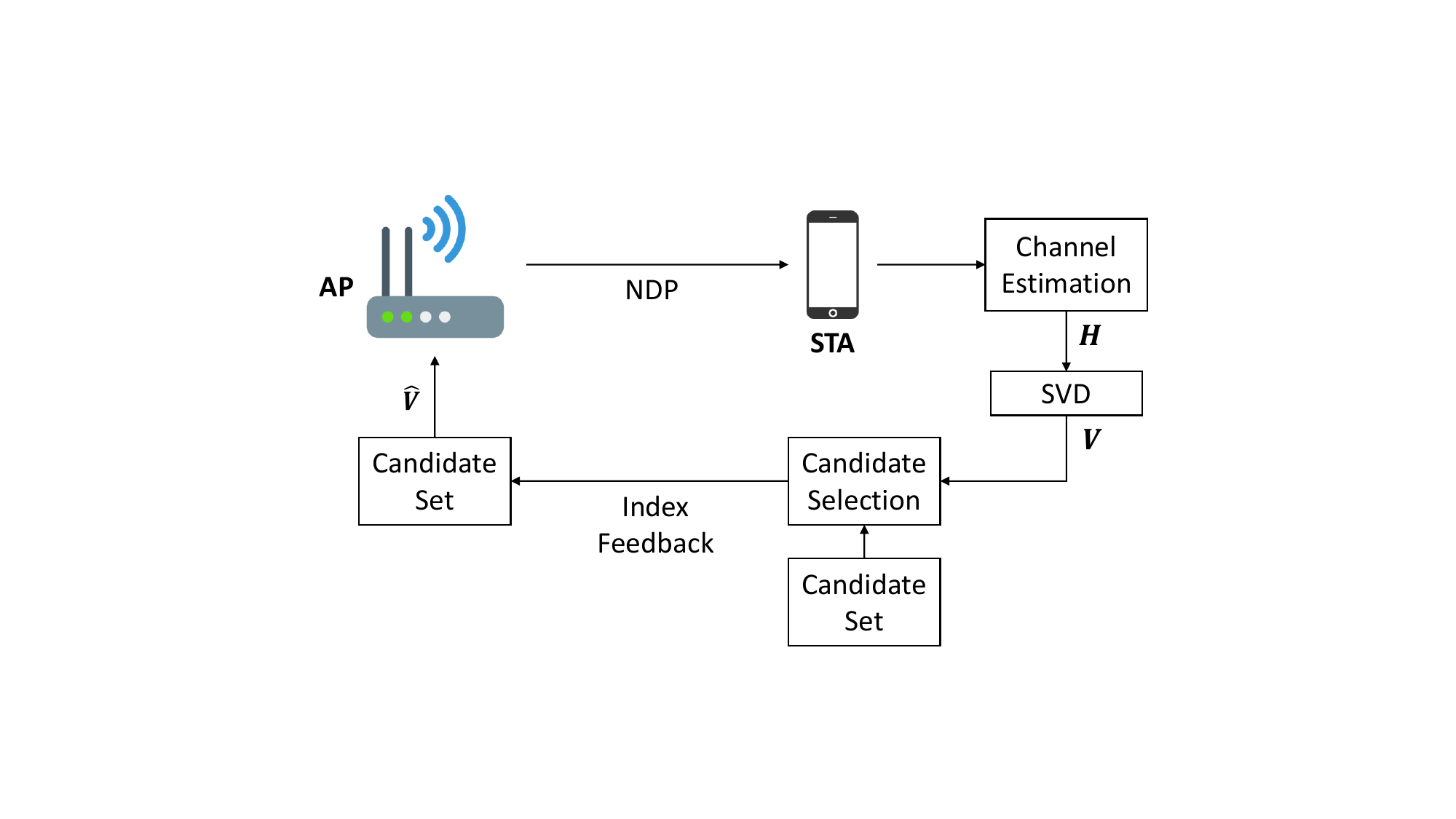}\vspace{-4mm}
    \caption{Channel sounding feedback using a candidate set.}
    \label{fig:candidate_selection}\vspace{-4mm}
\end{figure}

The candidate sets used in the proposed index-based methods are generated using a clustering algorithm. More specifically, we generate a large empirical data set by simulating the beamforming feedback over a set of diverse channel models. This data set is then fed into a \textit{k}-means clustering algorithm \cite{kmeans_arthur}. The \textit{k}-means clustering algorithm enables us to choose the number of clusters to generate for a given data set. The centroid of these clusters may be used as the candidates in our proposed methods. In the following subsections, we describe our proposed index-based beamforming methods in detail. The general methodology for candidate generation and beamforming feedback is similar as in Fig.~\ref{fig:candidate_selection}, whereas the representation of the data in the data set (and the candidates) is different for each method. 

It is clear that $\mathbf{V}$ is the optimal solution for beamforming in terms of improving the SNR at the receiver. Any other solutions, such as $\hat{\mathbf{V}}$ from a candidate set could lead to inaccuracy in the BFR and subsequently PER performance degradation. In our previous work \cite{9860553}, we generated the candidate set in the form vectors of quantized angle indexes (similar to Table~\ref{tab:fb_vec_examples}). In \cite{9860553}, we observe that even though using such a candidate set results in a loss in accuracy of the $\mathbf{V}$ matrix and degrades the PER performance, the reduction in BFR is large enough to provide significant gain in the goodput. In this work, we explore different ways of generating the data sets and subsequently, the candidate set so that for the same size of the candidate set, the error-rate performance may be improved. The representation of data and the methodology for clustering using \textit{k}-means clustering for each method will be detailed in the corresponding subsections. The general framework of performing \textit{k}-means clustering on a given data set is described in Algorithm~\ref{alg:kmeans}, where $d_{\text{f}}$ is a distance metric that is specific to the given proposed method. Mathematically, it may be represented as 
\begin{equation} \label{eq:d_func}
    d_{\text{f}} = f(\mathbf{x}; \mathbf{x}_{\text{ref}}),
\end{equation}
where $\mathbf{x}$ is a data point from the given data set, $\mathbf{x}_{\text{ref}}$ represents a centroid in \textit{k}-means clustering, $f:C^n\times \{\mathbf{x}_{\text{ref}}\} \Rightarrow R^{+}$ for $X_{\text{ref}} \in C^{n}$. Here the function $f$ corresponds to the distance metric chosen for different methods that will be discussed in the following subsections.

The centroids of the clusters obtained after convergence of the \textit{k}-means clustering is considered to be the candidate set. The representation of data for each method we consider and the distance used in \textit{k}-means clustering are described in Table~\ref{tab:index_based_bf}.

\begin{table}[t]
  \centering
  \caption{Index-based methods.}
  \label{tab:index_based_bf}
  \begin{tabular}{|p{0.02\linewidth}|p{0.32\linewidth}|p{0.32\linewidth}|p{0.12\linewidth}|}
    \hline
     & \textbf{Method} & \textbf{Candidate Representation} & \textbf{Distance} ($d_{\text{f}}$) \\
    \hline
    1 & Lower Quantization Order for $\psi$ (LQP) & Angle index vector & SED \\
    \hline   
    2 & Separate Clustering of $\phi$ and $\psi$ Angles (SCP) & Angle index vector & SED \\
    \hline  
    3 & Serialized $\mathbf{V}$ (SV-SED) & Complex elements of $\mathbf{V}$ & SED \\
    \hline  
    4 & Serialized $\mathbf{V}$ (SV-CD) & Complex elements of $\mathbf{V}$ & CD \\
    \hline  
    5 & Normalized Covariance Matrix (NCM) & Complex elements of covariance matrix & SED \\
    \hline  
    6 & iFOR with effective distance (iFOR+) & Angle index vector & SED \\
    \hline  
  \end{tabular}
\end{table}

\subsection{Lower Quantization Order for $\psi$ (LQP)}
In the development of this method, the distance metric used in \textit{k}-means clustering for measuring the separation between two distinct vectors is the SED. For calculating this distance, we calculate the distance between all the indices representing the $\phi$ and the $\psi$ angles, and we sum them together to determine the total distance. The following equation shows the SED calculated ($d_{\text{SED}}$) between the $i$th vector $\mathbf{x}_i$ in the data set and the centroid of the $k$th cluster $\mathbf{c}_k$
\begin{equation} \label{eq:sed_cpp}
    d_{\text{SED}} = \lvert \mathbf{x}_i-\mathbf{c}_{k} \rvert^2 = (\mathbf{x}_i-\mathbf{c}_{k})^T(\mathbf{x}_i-\mathbf{c}_{k}),
\end{equation}
where the vectors $\mathbf{x}_i$ and $\mathbf{c}_k$ have the dimensions $N_\text{a}\times 1$, with $N_\text{a}$ being the total number of angles to be reported in the feedback vector. The centroid of a cluster in \textit{k}-means clustering is obtained by averaging over all the vectors within that particular cluster \cite{kmeans_arthur}. 

As we will discuss later in the simulation results, a loss of accuracy in the $\phi$ angles in (\ref{eq:d_phi_matrix}) is far more detrimental to the link performance (e.g. packet-error-rate) than a loss of accuracy in the $\psi$ angles in (\ref{eq:g_psi_matrix}). Noting this observation, it can be considered that having a relatively high quantization order for $\psi$ angles may add further inaccuracy in calculating the SED between two vectors. Hence, in this method of generating the candidates, we consider lower quantization orders for the $\psi$ angles. In doing so, we reduce the impact of the contribution of the $\psi$ angle indexes to the SED. Giving a finer quantization level to the $\phi$ angles in calculating the distance in this way might improve the clustering of data in favor of enhancing the link performance.

\subsection{Separate Clustering of $\phi$ and $\psi$ Angles (SCP)}
In our previous work \cite{9860553}, we feed the entire angle vectors with feedback angle indices for $\phi$ and $\psi$ angles (as also shown in Table II) to the \textit{k}-means clustering algorithm. The cluster centroids are initialized using the \textit{k}-means++ algorithm \cite{kmeans_arthur}. After the \textit{k}-means clustering converges, we obtain the cluster centroids which are the candidate vectors. 

Another approach to obtaining the candidate vectors is to separate the feedback angle indices for the $\phi$ and $\psi$ angles. Here we have two data sets, one with the $\phi$ angle index data and another with the $\psi$ angle index data. We feed these data sets to the \textit{k}-means clustering function separately and obtain the candidate sets for $\phi$ and $\psi$ angle index data correspondingly. As we will demonstrate later in the simulation results, the link performance (e.g. PER) is more sensitive to the inaccuracy in $\phi$ angles in the candidates, as opposed to the inaccuracy in the $\psi$ angles. Performing separate clustering on the two different angles would thus allow us the flexibility to assign more feedback bits (or candidate vectors) for $\phi$ angles than for $\psi$ angles.

\begin{algorithm}[t]
\caption{\textit{k}-means clustering algorithm.}\label{alg:alg1}
\begin{algorithmic}
\STATE \textbf{Input:} number of candidates $N_{\text{k}}$
\STATE \textbf{Output:} $\mathbf{c}_{\text{k}} \gets$ centroids of the $N_{\text{k}}$ clusters
\STATE \textbf{Function:} \textsc{K-MEANS}
\STATE \hspace{0.5cm} initialize cluster centroids randomly as $\mu_1, \mu_2, ..., \mu_k$
\STATE \hspace{0.5cm} \textbf{for} each data point $\mathbf{x}^i$ in the data set
\STATE \hspace{1cm} choose the closest cluster centroid using
\STATE \hspace{1cm} $c^i := \underset{j}{\mathrm{arg\,min}}~d_{\text{f}}(\mathbf{x}^i, \mu_j)$
\STATE \hspace{1cm} assign $\mathbf{x}^i$ to the cluster $c^i$
\STATE \hspace{1cm} update centroids by taking the mean of all the \\\hspace{1.1cm}data points assigned to that cluster
\STATE \hspace{0.5cm} \textbf{repeat} until convergence or specified iterations reached
\STATE \textbf{return} $\mathbf{c}_{\text{k}}$

\end{algorithmic}
\label{alg:kmeans}
\end{algorithm}

Consider an example where the number of clusters for $\phi$ vectors is $W_1$ and the number of clusters for $\psi$ vectors is $W_2$. To give more weights on $\phi$ vectors, we have $W_1>W_2$, e.g., $W_1=2^8$  and $W_2=2^2$. The total number of bits required for each subcarrier group in this example is then given by $\log_2(W_1)+\log_2(W_2)=8+2=10$ bits. The values of $W_1$ and $W_2$ may be implementation dependent and can be modified depending on the resolution required for the two angles while keeping the total required number of feedback bits the same. 

\subsection{Serialized $\mathbf{V}$ (SV)}
So far we have considered the feedback vectors with angle indices in our index-based methods. In the serialized $\mathbf{V}$ method, however, we consider the actual complex values of the unitary steering matrix $\mathbf{V}$ obtained from the SVD of the channel matrix. Consider a $M\times N$ MIMO case, where the $\mathbf{V}$ matrix can be represented as 
\begin{equation}\label{eq:v_mat_form}
    \mathbf{V} = \begin{bmatrix} 
         v_{1,1} & v_{2,1} & \hdots & v_{M,1} \\ 
         \vdots & \hdots & \hdots & \vdots \\ 
         v_{1,N} & v_{2,N} & \hdots & v_{M,N}  
        \end{bmatrix}^T,
\end{equation}
where each $v_{i,j}$ represents a complex element of the $\mathbf{V}$ matrix for the $i$th row and $j$th column.

Furthermore, we serialize this $\mathbf{V}$ matrix so that we have a single vector instead of a matrix. This vectorized form of $\mathbf{V}$ may be expressed as 
\begin{equation}
\label{eq:serialV}
    \mathbf{v}^{\text{s}} = [v_{1,1}, v_{2,1}, \hdots, v_{M,1}, \hdots, v_{1,N}, v_{2,N}, \hdots, v_{M,N}]^T.
\end{equation}

Using the vector format in (\ref{eq:serialV}), we generate a new data set of feedback vectors. This data set is then fed to the \textit{k}-means clustering algorithm. In this method, we consider two distance metrics while clustering over the data set. The first one is the squared Euclidean distance, calculated similarly as in (\ref{eq:sed_cpp}). The results using this method are labeled as \textit{SV-SED}. Another distance metric that we consider is the Cosine Distance (CD) calculated ($d^{\text{CD}}_{i,k}$) as 
\begin{equation}
    d^{\text{CD}}_{i,k} = 1 - \frac{\lvert\mathbf{c}^*_{k}\mathbf{v}^\text{s}_{i}\rvert}{\lVert\mathbf{v}^\text{s}_{i}\rVert_2 \lVert\mathbf{c}_{k}\rVert_2},
\end{equation}
where $\mathbf{v}^\text{s}_{i}$ represents the $i$th serialized vector in the data set and $\mathbf{c}_{k}$ is the $k$th cluster. Furthermore, if $\mathbf{V}$ is a unitary matrix, then $\lVert\mathbf{v}^s\rVert_2=N$ since the $L_2$ norm of each column of $\mathbf{V}$ is $1$, the above equation reduces to 
\begin{equation}
\label{eq:d_cd}
    d^{\text{CD}}_{i,k} = 1 - \frac{\lvert\mathbf{c}^*_{k}\mathbf{v}^s_{i}\rvert}{N^2}, 
\end{equation}
making the cosine distance inversely proportional to the inner product between the two vectors. The results using this method are called serialized $\mathbf{V}$ with cosine distance (\textit{SV-CD}).

After the \textit{k}-means algorithm converges and the centroids are obtained, the candidates are obtained by re-arranging the vectors into the matrix form, as in (\ref{eq:v_mat_form}). Also, to satisfy the condition of unitary matrices, the columns of these matrices are orthogonalized using the Gram-Schmidt procedure \cite{strang_algebra}.

\subsection{Normalized Covariance Matrix (NCM)}
In this method, we use an alternative representation of the beamforming feedback data instead of 1) the vector of angle indexes obtained via compressed beamforming or 2) serializing the $\mathbf{V}$ matrix. Namely, we use the covariance matrix of the channel matrix ($\mathbf{H}$) as a data point, which can be expressed as 
\begin{equation}\label{eq:cov_mat_H}
    \mathbf{K} = \mathbf{H}^H\mathbf{H},
\end{equation}
where $\mathbf{H}^H$ represents the Hermitian of the channel matrix $\mathbf{H}$. The data set of these matrices is then fed to the \textit{k}-means clustering algorithm, where the metric used to calculate the distance between a covariance matrix and the cluster centroid is the SED.

To understand the relationship between clustering using the covariance matrix and the beamforming matrix $\mathbf{V}$, let us decompose $\mathbf{H}$ using SVD as 
\begin{equation}
    \mathbf{H} = \mathbf{U}\mathbf{S}\mathbf{V}^H,
\end{equation}
where $\mathbf{U}$ and $\mathbf{V}$ are unitary matrices and $\mathbf{S}$ is a diagonal matrix with the corresponding singular values. Substituting this back into the covariance matrix in (\ref{eq:cov_mat_H}), we get 
\begin{equation}
    \mathbf{K} = (\mathbf{U}\mathbf{S}\mathbf{V}^H)^H\mathbf{U}\mathbf{S}\mathbf{V}^H = \mathbf{V}\mathbf{S}^2\mathbf{V}^H, \label{eq:cov_mat_1}
\end{equation}
as $\mathbf{U}^H\mathbf{U}=\mathbf{I}$, and $\mathbf{S}^2$ contains the square of the singular values at the diagonal.

Now consider an example case with a channel vector $\mathbf{h}$ of size $1\times M$, which would have only one singular value $s$. For this $\mathbf{h}$, the size of $\mathbf{K}$ is $M\times M$ and the size of $\mathbf{v}$ is $M\times 1$, such that $\mathbf{v}=[v_1,v_2,\hdots,v_M ]^T$. Substituting in (\ref{eq:cov_mat_1}), we have
\begin{equation}
    \mathbf{K} = s^2\begin{bmatrix} 
        v_{1}v_{1}^{*} & v_{1}v_{2}^{*} & \hdots & v_{1}v_{M}^{*}  \\ 
        v_{2}v_{1}^{*} & v_{2}v_{2}^{*} & \hdots & v_{2}v_{M}^{*}  \\ 
        \vdots & \hdots & \ddots & \vdots \\ 
        v_{M}v_{1}^{*} & v_{M}v_{2}^{*} & \hdots & v_{M}v_{M}^{*} \label{eq:norm_cov_mat}
        \end{bmatrix}.
\end{equation}

For \textit{k}-means clustering, we consider the SED between these covariance matrices. Let $\mathbf{K}_\text{a}$ and $\mathbf{K}_\text{b}$ be two different covariance matrices, each of size $M\times M$. The distance between the two matrices ($d_{\text{cov}}$) then can be computed as 
\begin{equation}\label{eq:d_cov_sed}
    d_{\text{cov}} = \sum^M_{i=1} \sum^M_{j=1} \Bigg| \frac{\mathbf{K}_\text{a}(i,j)}{\lVert \mathbf{K}_\text{a} \rVert_\text{F}} - \frac{\mathbf{K}_\text{b}(i,j)}{\lVert \mathbf{K}_\text{b} \rVert_\text{F}} \Bigg|^2,
\end{equation}
where $\lVert .\rVert_\text{F}$ represents the Frobenius norm \cite{strang_algebra} of the argument. In (\ref{eq:d_cov_sed}), each element in $\mathbf{K}_\text{a}$ and $\mathbf{K}_\text{b}$ is normalized using the Frobenius norm of the respective matrix. 

\begin{lemma}
For two covariance matrices $\mathbf{K}_\text{a}$ and $\mathbf{K}_\text{b}$ obtained from channel matrices with a single column, the squared Euclidean distance is inversely proportional to the inner product of the corresponding vectors $\mathbf{v}_\text{a}$ and $\mathbf{v}_\text{b}$:
\begin{equation} \label{eq:dcov}
    d_{\text{cov}} = 2(1-\lvert\mathbf{v}_\text{a}^H\mathbf{v}_\text{b}\rvert^2),
\end{equation}
where $\mathbf{v}_\text{a}$ and $\mathbf{v}_\text{b}$ are obtained from the SVD of $\mathbf{K}_\text{a}$ and $\mathbf{K}_\text{b}$ respectively.
\end{lemma}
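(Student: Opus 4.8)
The plan is to reduce the whole statement to a one-line computation with Frobenius inner products of rank-one projectors. First I would specialize (\ref{eq:cov_mat_1}) to the single-column case, exactly as in (\ref{eq:norm_cov_mat}): each covariance matrix factors as $\mathbf{K}_\text{a} = s_\text{a}^2\,\mathbf{v}_\text{a}\mathbf{v}_\text{a}^H$ and $\mathbf{K}_\text{b} = s_\text{b}^2\,\mathbf{v}_\text{b}\mathbf{v}_\text{b}^H$, where $\mathbf{v}_\text{a},\mathbf{v}_\text{b}$ are the right singular vectors and hence unit-norm ($\mathbf{v}_\text{a}^H\mathbf{v}_\text{a}=\mathbf{v}_\text{b}^H\mathbf{v}_\text{b}=1$, since the corresponding $\mathbf{V}$ is unitary). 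The next step is to compute the Frobenius norm of such a matrix: $\lVert\mathbf{K}_\text{a}\rVert_\text{F}^2 = s_\text{a}^4\,\operatorname{tr}(\mathbf{v}_\text{a}\mathbf{v}_\text{a}^H\mathbf{v}_\text{a}\mathbf{v}_\text{a}^H) = s_\text{a}^4\,(\mathbf{v}_\text{a}^H\mathbf{v}_\text{a})^2 = s_\text{a}^4$, so that $\lVert\mathbf{K}_\text{a}\rVert_\text{F} = s_\text{a}^2$ and the normalized matrix appearing inside the sum in (\ref{eq:d_cov_sed}) is exactly the rank-one projector $\mathbf{v}_\text{a}\mathbf{v}_\text{a}^H$, with the singular value cancelled; likewise $\mathbf{K}_\text{b}/\lVert\mathbf{K}_\text{b}\rVert_\text{F} = \mathbf{v}_\text{b}\mathbf{v}_\text{b}^H$. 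This normalization step is the crux: it is what removes any dependence on $s_\text{a},s_\text{b}$ and leaves a quantity that depends only on the pair $(\mathbf{v}_\text{a},\mathbf{v}_\text{b})$.

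Then I would recognize the double sum in (\ref{eq:d_cov_sed}) as a squared Frobenius norm, $d_{\text{cov}} = \lVert \mathbf{v}_\text{a}\mathbf{v}_\text{a}^H - \mathbf{v}_\text{b}\mathbf{v}_\text{b}^H\rVert_\text{F}^2$, and expand it via the Frobenius inner product $\langle\mathbf{A},\mathbf{B}\rangle = \operatorname{tr}(\mathbf{A}^H\mathbf{B})$, giving $d_{\text{cov}} = \lVert\mathbf{v}_\text{a}\mathbf{v}_\text{a}^H\rVert_\text{F}^2 + \lVert\mathbf{v}_\text{b}\mathbf{v}_\text{b}^H\rVert_\text{F}^2 - 2\operatorname{Re}\operatorname{tr}(\mathbf{v}_\text{a}\mathbf{v}_\text{a}^H\mathbf{v}_\text{b}\mathbf{v}_\text{b}^H)$. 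The first two terms each equal $1$ by the norm computation above. For the cross term, cyclicity of the trace gives $\operatorname{tr}(\mathbf{v}_\text{a}\mathbf{v}_\text{a}^H\mathbf{v}_\text{b}\mathbf{v}_\text{b}^H) = (\mathbf{v}_\text{a}^H\mathbf{v}_\text{b})(\mathbf{v}_\text{b}^H\mathbf{v}_\text{a}) = \lvert\mathbf{v}_\text{a}^H\mathbf{v}_\text{b}\rvert^2$, which is already real and nonnegative, so the $\operatorname{Re}$ is redundant. Collecting terms yields $d_{\text{cov}} = 2 - 2\lvert\mathbf{v}_\text{a}^H\mathbf{v}_\text{b}\rvert^2$, which is precisely (\ref{eq:dcov}).

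I do not expect a real obstacle: once the normalization is handled the argument is essentially immediate. The only points needing a little care are the bookkeeping between the element-wise sum in (\ref{eq:d_cov_sed}) and the trace/Frobenius formalism, and stating explicitly that dividing by $\lVert\mathbf{K}\rVert_\text{F}$ is what collapses the dependence on the singular values so the distance becomes a function of the angle between $\mathbf{v}_\text{a}$ and $\mathbf{v}_\text{b}$ alone. It is also worth remarking that this result parallels the cosine-distance expression (\ref{eq:d_cd}) obtained for the SV-CD method, both being monotone decreasing functions of $\lvert\mathbf{v}_\text{a}^H\mathbf{v}_\text{b}\rvert$, which is the natural similarity measure for unit steering vectors.
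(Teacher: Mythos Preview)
Your proposal is correct and follows essentially the same approach as the paper: normalize each covariance by its Frobenius norm to strip out the singular values, expand the squared Frobenius distance, and identify the cross term with $\lvert\mathbf{v}_\text{a}^H\mathbf{v}_\text{b}\rvert^2$. The only difference is cosmetic: the paper carries out the computation element-by-element via the double sums in (\ref{eq:d_cov_sed}), whereas you package the same steps in trace/Frobenius-inner-product notation, which is more compact but not a different argument.
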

\begin{proof}
See Appendix.
\end{proof}

Note that $\mathbf{v}_\text{a}$ and $\mathbf{v}_\text{b}$ in (\ref{eq:dcov}) are the unitary vectors obtained from the SVD of  $\mathbf{K}_\text{a}$ and  $\mathbf{K}_\text{b}$, respectively. Thus, for the single spatial stream case, the SED between two covariance matrices is inversely proportional to the inner product or the cosine distance between the corresponding $\mathbf{v}$ vectors. This will have implications on the performance metrics which will be discussed in the subsequent sections.


\begin{figure}[t] 
    \centering\vspace{-3mm}
    \includegraphics[width=0.95\linewidth,trim={10cm 5.5cm 10cm 5.5cm},clip]{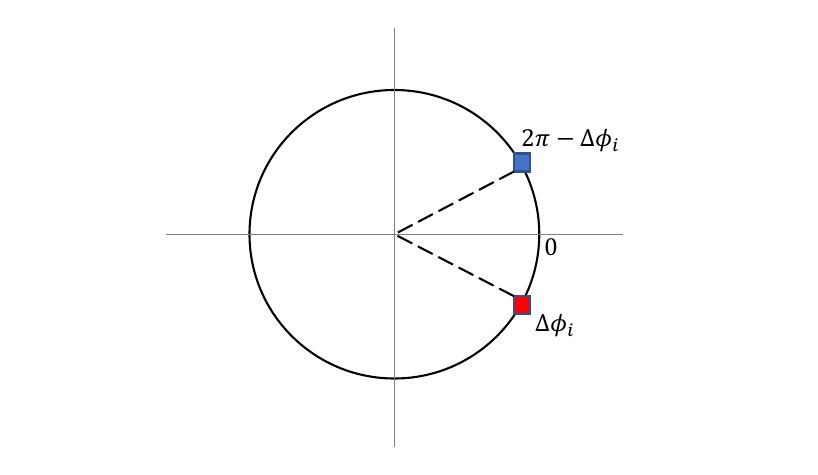}\vspace{-4mm}
    \caption{Effective distance considering difference of two $\phi$ angles.}
    \label{fig:eff_dist}\vspace{-4mm}
\end{figure}

\subsection{iFOR with Effective Distance (iFOR+)}\label{sec:ifor_plus}
In this method, we consider the effective distance (or effective difference) calculated between the $\phi~\in[0,2\pi)$ angles for candidate generation. Consider any $\phi$ angle index in a feedback vector, examples of which are shown in Table~\ref{tab:fb_vec_examples}. While clustering, the \textit{k}-means clustering algorithm calculates the squared Euclidean distance between a vector in consideration and all the cluster centroids.

Let the $i$th $\phi$ angle index in the feedback vector $\mathbf{v}_\text{a}$ be $\phi_{i,\text{a}}$ and the $i$th $\phi$ index in feedback vector $\mathbf{v}_\text{b}$ be $\phi_{i,\text{b}}$. the difference between these two angle indices is given by, 
\begin{equation}
    \Delta\phi_i = \lvert \phi_{i,\text{a}} - \phi_{i,\text{b}} \rvert,
\end{equation}
where $\Delta\phi_i\in[0,2\pi)$. Mathematically, it may be considered that the effective distance ($\Delta_{\text{eff}}$) between these two angles is, 
\begin{equation}
    	\Delta_{\text{eff}} =
	\begin{cases}
		\Delta\phi_i &\text{if } \Delta\phi_i\in[0,\pi] \\
		2\pi - \Delta\phi_i &\text{if } \Delta\phi_i\in(\pi,2\pi)
	\end{cases},
\end{equation}
where $\Delta_{\text{eff}} \in [0,\pi]$. Fig.~\ref{fig:eff_dist} illustrates using a phasor representation how $\Delta\phi_i$ and ($2\pi-\Delta\phi_i$) are the same distance away from $0$, making their effective distance be the same. 

\section{Partial Compressed Beamforming Feedback} \label{sec:fixed_psi}

An alternative approach to reduce the amount of information required in beamforming feedback is to transmit partial compressed beamforming feedback information. In the compressed beamforming methodology described in Section~\ref{sec:sys_model}, information about two types of angles is transmitted as the beamforming feedback, namely, $\mathbf{\Phi}$ in (\ref{eq:phi_bf}) and $\mathbf{\Psi}$ in (\ref{eq:psi_bf}). However, if only $\mathbf{\Phi}$ angle information is transmitted in the feedback, the number of angles that need to be reported in the feedback is halved. For $\mathbf{\Psi}$ angles, we may consider using a fixed vector of values that is known both to the AP as well as the STA. This approach was considered in the IEEE 802.11ah standard \cite{11ah} when only a single spatial stream is used in transmission. In this work, we extend this idea to multiple spatial streams. 

\begin{figure}[t] 
    \centering\vspace{-3mm}
    \includegraphics[width=0.9\linewidth]{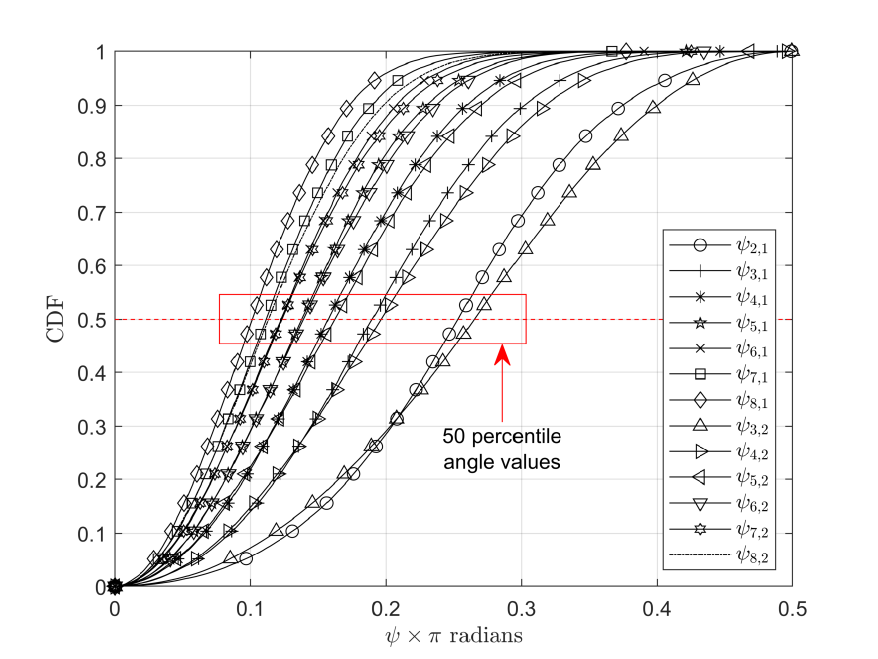}\vspace{-4mm}
    \caption{CDF of the $\psi$ angle data collected over $10^6$ packets in an $8\times 2$ MIMO case.}
    \label{fig:psi_cdf}\vspace{-4mm}
\end{figure}

In this case, the fixed angle values for $\mathbf{\Psi}$ may be determined using empirical data. To obtain these values, we run the downlink link-level simulation using compressed beamforming for a large number of packets (e.g., $10^6$) for a given MIMO configuration and for given values of $N_\text{c}$  and $N_\text{r}$. We store the raw angle values (before converting to the corresponding quantized index) of all the $\mathbf{\Psi}$ angles for all the subcarriers of all the simulated packets. From this generated data, we obtain the $50$th percentile values of all the $\mathbf{\Psi}$ angles and use them as our fixed values instead of computing them for the compressed beamforming representation as shown in (\ref{eq:psi_bf}). The beamformee and the beamformer both will have complete information about the fixed vector of the $\mathbf{\Psi}$ angle values. As an example, Fig.~\ref{fig:psi_cdf} shows the CDF of the $\mathbf{\Psi}$ angle data collected over $10^6$ packets and across $242$ subcarriers for a $8\times 2$ MIMO case. For $8\times 2$ MIMO, there are thirteen $\mathbf{\Psi}$ angles in the compressed beamforming representation. 

While this approach may result in loss of accuracy in generating $\mathbf{V}$ matrix at the beamformer, the reduction in the required number of bits in the beamforming feedback may be significant enough to result in a gain in the link goodput, as will be shown later. We further demonstrate this using the following example, where we show a lower bound for the GCS using the fixed $\psi$ values. The GCS is an intermediate key performance indicator (KPI) to measure the accuracy of the beamforming feedback.

\subsection{Partial Compressed Beamforming Feedback Example: }\label{sec:21_gcs_eg}

Consider a case where $N_\text{c}=1$ and $N_\text{r}=2$. In this case, the compressed beamforming representation of $\mathbf{V}$ reduces to 
\begin{equation}
    \mathbf{V} = \mathbf{D}_1 \mathbf{G}^T_{21}(\psi_{2,1})\times \Tilde{ \mathbf{I}}_{2\times 1}. \label{eq:v21_rep}
\end{equation}
The matrix $\mathbf{D}_1$ can be expressed as 
\begin{equation}
    \mathbf{D}_1 = \begin{bmatrix} 
        e^{j\phi_{1,1}} & 0\\ 
        0 & 1
        \end{bmatrix},
\end{equation}
the Givens rotation matrix $\mathbf{G}_{21}(\psi)$ is 
\begin{equation}
    \mathbf{G}_{21}(\psi) = \begin{bmatrix} 
        \cos(\psi_{2,1}) & \sin(\psi_{2,1}) \\ 
        -\sin(\psi_{2,1}) & \cos(\psi_{2,1}) \\ 
        \end{bmatrix},
\end{equation}
and $\mathbf{\tilde{I}} = [1~0]^T$. Substituting these in (\ref{eq:v21_rep}), we get
\begin{equation}
    \mathbf{V} = \begin{bmatrix} 
        \cos(\psi_{2,1}) e^{j\phi_{1,1}} \\ 
        \sin(\psi_{2,1}) 
        \end{bmatrix}. \label{eq:21_compressed}
\end{equation}

\begin{figure}[t] 
    \centering\vspace{-3mm}
    \includegraphics[width=0.9\linewidth]{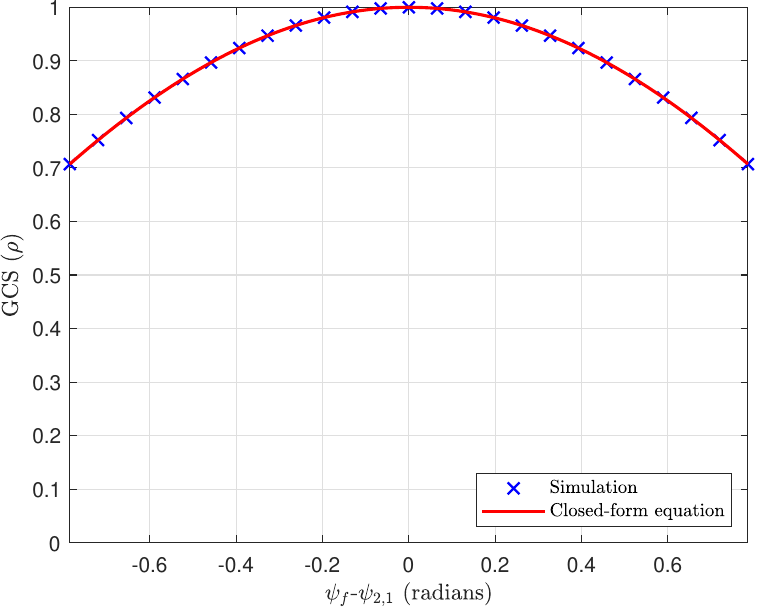}\vspace{-4mm}
    \caption{Variation in the GCS ($\rho$) for all the possible cases in partial compressed beamforming feedback for $N_\text{c}=1$ and $N_\text{r}=2$.}
    \label{fig:gcs_psi}\vspace{-4mm}
\end{figure}

\begin{table}[b]
  \centering
  \caption{Description of the simulation parameters.}
  \label{tab:sim_parameters}
  \begin{tabular}{|p{0.25\linewidth}|p{0.15\linewidth}|p{0.42\linewidth}|}
    \hline
    \textbf{Parameter} & \textbf{Value} & \textbf{Description} \\
    \hline
    $T_{\text{NDPA}}$ & $28~\mu s$ & Duration of NDPA frame \\
    \hline
    $T_{\text{NDP}}$ & $48+(N_r\times8)~\mu s$ & Duration of NDP frame \\
    \hline
    $T_{\text{SIFS}}$ & $16~\mu s$ & Short inter-frame spacing \\
    \hline
    $T_{\text{preamble}}$ & $64~\mu s$ & Duration of preamble \\
    \hline
    $T_{\text{ACK}}$ & $50~\mu s$ & ACK transmission duration \\
    \hline
    BW & $20$ MHz & Channel bandwidth \\
    \hline
    No. of Subcarriers & $242$ &  \\
    \hline
    Guard Interval & $0.8~\mu s$ & Guard interval for the data field \\
    \hline
    $N_\text{g}$ & $4$ & Subcarrier grouping \\
    \hline
    $b_{\phi}$ & $6$ & No. of bits used to quantize $\phi$ \\
    \hline
    $b_{\psi}$ & $4$ & No. of bits used to quantize $\psi$ \\
    \hline
    $P_0$ & $10^{-2}$ & Target PER for MCS selection \\
    \hline    
  \end{tabular}
\end{table}

From (\ref{eq:21_compressed}), it can be seen that the $\psi$ angles affect the amplitude of the elements in $\mathbf{V}$ and the $\phi$ angles affect the phase of $\mathbf{V}$. The matrix $\mathbf{V}$ is unitary, so it should satisfy $\mathbf{V}^H\mathbf{V}=1$. Now consider that we use a fixed value ($\psi_\text{f}$) for the $\psi$ angle. The new matrix $\mathbf{V}_\text{f}$ may then be expressed as 
\begin{equation}
	\mathbf{V}_\text{f}= \begin{bmatrix} 
        \cos(\psi_\text{f}) e^{j\phi_{1,1}} \\ 
        \sin(\psi_\text{f}) 
        \end{bmatrix}.
\end{equation}

The GCS to compare $\mathbf{V}_\text{f}$ to the real $\mathbf{V}$ can be computed as 
\begin{align}
    \rho = \lvert \mathbf{V}^H\mathbf{V}_\text{f} \rvert &= \Bigg\lvert \begin{bmatrix} 
        \cos(\psi_{2,1}) e^{j\phi_{1,1}} \\ 
        \sin(\psi_{2,1}) 
        \end{bmatrix}^H
	 \begin{bmatrix} 
        \cos(\psi_{f}) e^{j\phi_{1,1}} \\ 
        \sin(\psi_{f}) 
        \end{bmatrix} \Bigg\rvert \nonumber \\
	&= \lvert \cos(\psi_{2,1})\cos(\psi_\text{f}) + \sin(\psi_{2,1})\sin(\psi_\text{f}) \rvert \nonumber \\
	&= \lvert \cos(\psi_\text{f}-\psi_{2,1}) \rvert.
\end{align}

For the $2\times 1$ MIMO case, the fixed value chosen is $\psi_\text{f} = 0.25\pi$, and $\psi_{2,1}\in \{0, 0.5\pi\}$. Therefore, $(\psi_\text{f}-\psi_{2,1}) \in \{-0.25\pi, 0.25\pi\}$. Fig.~\ref{fig:gcs_psi} shows the GCS values for this range of angle values. It can be seen that there is a lower bound on the GCS equal to $\cos(0.25\pi) = \cos(-0.25\pi) = 1/\sqrt{2}$. With this lower bound, the performance loss due to loss in accuracy in $\psi$ angles is relatively small, as will be demonstrated in the results section.

\section{Simulation Results} \label{sec:sim_results}
In this section, we evaluate the methods described in this paper and compare them with two baseline techniques: IEEE 802.11be (compressed beamforming) and our previously proposed technique work iFOR, presented in \cite{9860553}.  

\begin{figure}[t] 
    \centering\vspace{-3mm}
    \includegraphics[width=0.9\linewidth]{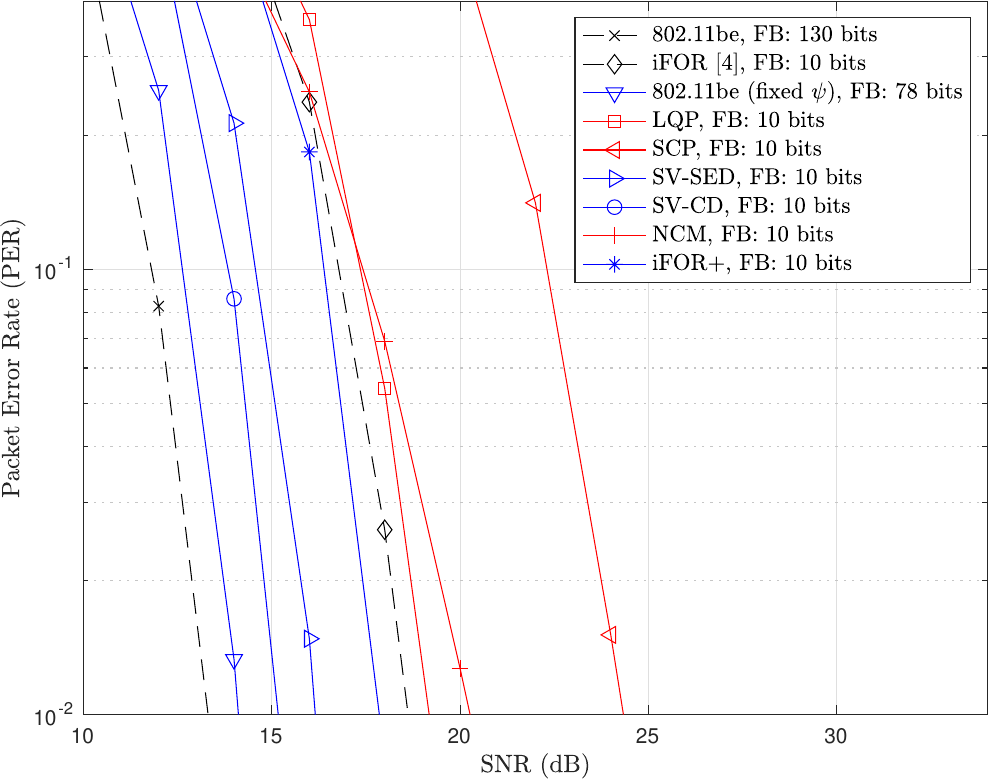}\vspace{-4mm}
    \caption{PER versus SNR comparison for $8\times 2$ MIMO with  MCS index 4.}
    \label{fig:per_mcs4}\vspace{-4mm}
\end{figure}

\subsection{Simulation Setup}
For all the subsequent results in this section, we consider a SU-MIMO downlink transmission complying with the IEEE 802.11be Wi-Fi standard. All the simulations are based on the toolboxes offered by MathWorks. Table~\ref{tab:sim_parameters} summarizes the common simulation parameters for the results presented in this section. For the \textit{LQP} method, we use two bits to quantize the $\psi$ angles instead of four. For our simulations, we consider channel models A-E defined by the IEEE 802.11 working group \cite{chan_model}. The key characteristics of these channel models are summarized in Table~\ref{tab:chan_models}, where $T_{\text{RMS}}$ is the RMS delay spread, $T_{\text{max}}$ is the maximum delay spread, and $N_{\text{taps}}$ is the number of taps in the channel. We use combined data generated from these channel models to generate our candidate sets using \textit{k}-means clustering. The PER results, however, are simulated using the channel model D. Henceforth, we refer to a scenario with $N_\text{r}$ antennas at the beamformer and $N_\text{c}$ number of spatial streams as the $N_\text{r}\times N_\text{c}$ MIMO case. Unless mentioned otherwise, we consider an $8\times 2$ SU-MIMO transmission in the downlink in all our simulations. The payload for each iteration of a simulation is $1,000$ bytes. For all the simulations related to the index-based methods, we use $1024$ candidates, which require $10$ bits of feedback per subcarrier group in the BFR. In the \textit{SCP} method, we use $256$ candidates for the $\phi$ angles, and $4$ candidates for the $\psi$ angles, and combine them to form $1024$ candidates.

\begin{figure}[t] 
    \centering\vspace{-3mm}
    \includegraphics[width=0.9\linewidth]{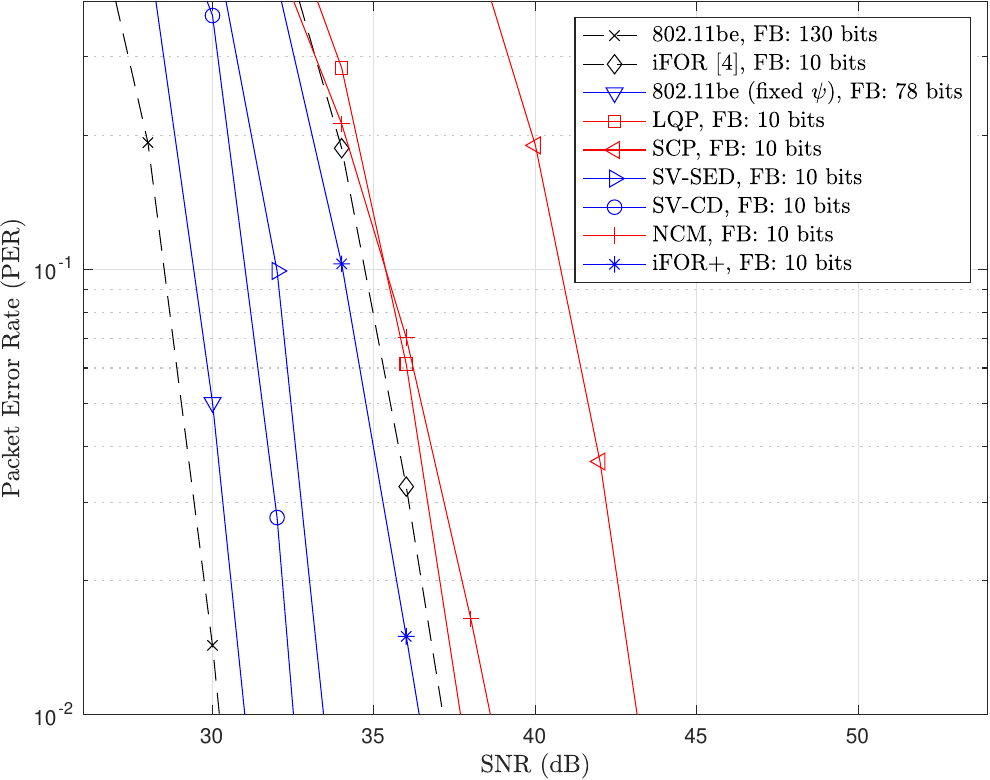}\vspace{-4mm}
    \caption{PER versus SNR comparison for $8\times 2$ MIMO with MCS index 11.}
    \label{fig:per_mcs11}\vspace{-4mm}
\end{figure}

\begin{table}[b]
  \centering
  \caption{Properties of the channel models.}
  \label{tab:chan_models}
  \begin{tabular}{|p{0.2\linewidth}|p{0.1\linewidth}|p{0.1\linewidth}|p{0.1\linewidth}|p{0.1\linewidth}|p{0.1\linewidth}|}
    \hline
    \multirow{2}{*}{\textbf{Parameter}} & \multicolumn{5}{c|}{\textbf{Channel Model}} \\
    \cline{2-6}
     & A & B & C & D & E \\
    \hline
    $T_{\text{RMS}}$ (ns) & 0 & 15 & 30 & 50 & 100 \\
    \hline
    $T_{\text{max}}$ (ns) & 0 & 80 & 200 & 390 & 730 \\
    \hline
    $N_{\text{taps}}$ & 1 & 9 & 14 & 18 & 18 \\
    \hline
  \end{tabular}
\end{table}

\subsection{Packet Error Rate Comparison}

Fig.~\ref{fig:per_mcs4} and Fig.~\ref{fig:per_mcs11} depict the PER results of the aforementioned methods for MCS $4$ and $11$, respectively. The fixed $\psi$ method offers a modest reduction in the feedback bits and performs well in terms of the PER, being the closest to the 802.11be baseline. Using fixed values for $\psi$ angles in the BFR enables transmitting information about only half of the angles compared to the traditional compressed beamforming method. As demonstrated with the example in Section~\ref{sec:21_gcs_eg}, using fixed $\psi$ values does not lead to a critical loss in accuracy of the $\mathbf{V}$ matrix, thus enabling decent PER performance. But even though the PER performance for this method is the best compared to the index-based methods we propose, it still requires a lot more information ($78$ bits per subcarrier group) compared to the $10$ bits per subcarrier group required for the index-based methods. This will have implications on the throughput gain for this method, as will be discussed later in this section. 

In \textit{LQP}, we use a lower quantization order for the $\psi$ angles, to reduce the impact of $\psi$ angle index values in the distance calculation while clustering and generation of candidates. However, similar to the case in the fixed $\psi$ method, losing accuracy in the $\psi$ angle results in a slight degradation in PER. This is again evident here noticing that the PER of \textit{LQP} is slightly worse than that of iFOR \cite{9860553}. In the \textit{SCP} method, we cluster the $\phi$ and the $\psi$ angles separately and then combine those candidates to generate our full candidate set. In compressed beamforming, even though the $\phi$ and the $\psi$ angles are calculated independently, their combined use is important to the representation of the $\mathbf{V}$ matrix. Thus, when clustering over them separately, the resulting candidates suffer from a loss in coherency and result in an additional loss in accuracy. This results in the candidates being less than ideal and resulting in the worst PER performance from all the methods proposed. 

Using \textit{SV-SED} and \textit{SV-CD} however, we are able to capture the pair-wise correlation between the candidates (or cluster centroids while clustering) and the real channel feedback vectors. This results in the best PER performance compared to all other index-based methods we have proposed. The \textit{SV-SED} method uses SED as the metric while clustering, which is not able to capture the pair-wise correlation between two vectors as well as the \textit{SV-CD} method does using cosine distance. We explain why using the following example. The serialized representation of a steering matrix has been shown in (\ref{eq:serialV}). Consider two such serialized vectors $\mathbf{v}_{\text{s,a}}$ and $\mathbf{v}_{\text{s,b}}$. The SED between these two vectors $\mathbf{v}_{\text{s,a}}$ and $\mathbf{v}_{\text{s,b}}$ can be calculated as 
\begin{align}
    d_{\text{SED}} &= \lvert \mathbf{v}_{\text{s,a}} - \mathbf{v}_{\text{s,b}} \rvert ^2 \nonumber\\
	&= \lvert \mathbf{v}_{\text{s,a}} \rvert ^2 + \lvert \mathbf{v}_{\text{s,b}} \rvert ^2 - \text{real}(\mathbf{v}_{\text{s,a}}^H\mathbf{v}_{\text{s,b}}). \label{eq:sv_sed}
\end{align}

\begin{figure}[t] 
    \centering\vspace{-3mm}
    \includegraphics[width=0.9\linewidth]{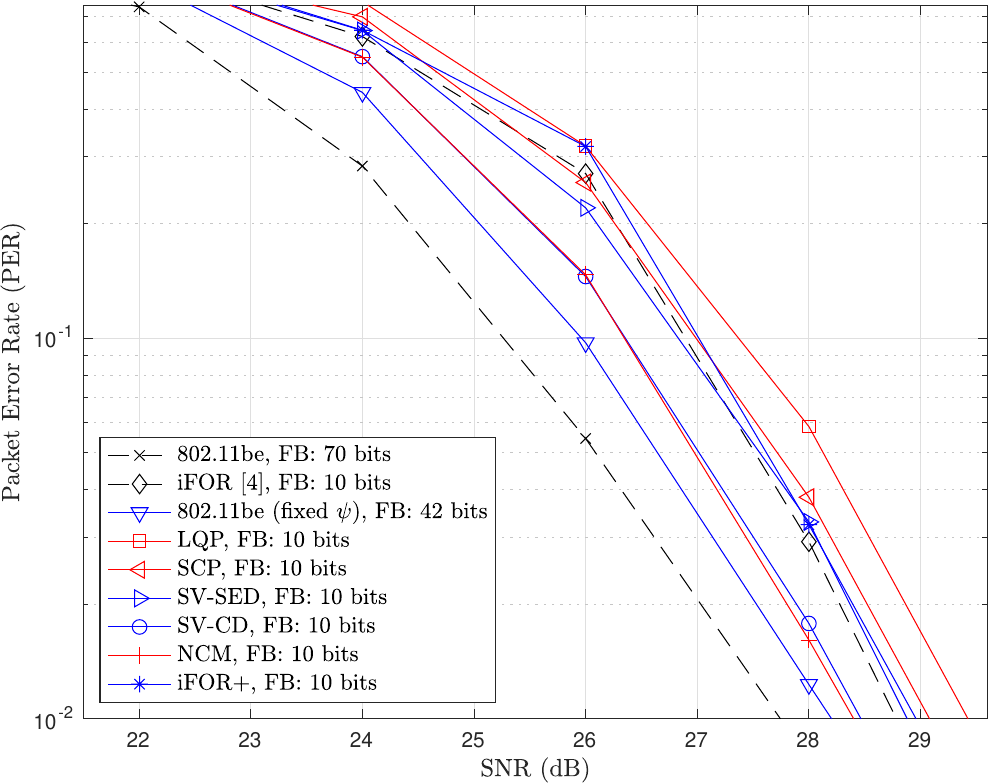}\vspace{-4mm}
    \caption{PER versus SNR comparison for $8\times 1$ MIMO with MCS index 11.}
    \label{fig:per_mcs11_8x1}\vspace{-4mm}
\end{figure}

From (\ref{eq:sv_sed}), it is clear that the SED between $\mathbf{v}_{\text{s,a}}$ and $\mathbf{v}_{\text{s,b}}$ is inversely proportional to the real part of the inner product between the two vectors. Whereas, in (\ref{eq:d_cd}) it has been shown that the CD between these two vectors is inversely proportional to the inner product. Hence, the \textit{SV-CD} method does a better job of encapsulating the element-wise correlation between two vectors resulting in better PER performance.

Using the \textit{iFOR+} method, we account for the effective distances between the $\phi$ angle indexes while generating the candidate set, as discussed in Section~\ref{sec:ifor_plus}. The corresponding PER results in Fig.~\ref{fig:per_mcs4} and Fig.~\ref{fig:per_mcs11} demonstrate that correcting for the effective distance enables the \textit{iFOR+} method to outperform the \textit{iFOR} \cite{9860553} results, albeit the performance is still not comparable to the 802.11be baseline.

For the $8\times 2$ MIMO case, using the \textit{NCM} method, the covariance matrices in (\ref{eq:cov_mat_1}) are normalized using the sum of squares of the singular values in the $\mathbf{S}$ matrix. Normalizing using this value does not result in the resultant matrix being unitary, which affects the clustering algorithm. The result is non-ideal candidates which leads to worse performance than that of \textit{iFOR}.

In Fig.~\ref{fig:per_mcs11_8x1}, we compare the PER results for the different proposed methods for an $8\times 1$ MIMO link with MCS index $11$. For the single spatial stream case, we demonstrate in (\ref{eq:app_2}) in the appendix that the SED between two \textit{NCMs} is inversely proportional to the inner product of the respective steering matrices. Hence, for the $8\times 1$ MIMO case, the PER performance of the \textit{NCM} method closely matches that of the \textit{SV-CD} method. The accuracy requirement for beamforming feedback for $8\times 1$ MIMO is not as high as for $8\times 2$ MIMO, resulting in the performance of all the index-based feedback methods being closer to each other than in the $8\times 2$ MIMO case. For the \textit{LQP} method,  since we lower the quantization order of the $\psi$ angles, there is a small degradation in the PER performance compared to \textit{iFOR}. 

\subsection{MCS Selection}\label{sec:mcs_discussion}

In this subsection, we discuss the MCS selection and the subsequent implications on the goodput performance. The simulation results displayed in Fig.~\ref{fig:mcs_selection} are for an $8\times 2$ MIMO case, where MCS index selection is done for a given SNR value based on the pre-determined PER threshold $P_0$ that the link should satisfy. For the results in Fig.~\ref{fig:mcs_selection}, $P_0$ is set to be $10^{-2}$, and the highest MCS index from the PER simulations that satisfies this threshold is selected. The selected MCS index determines the chosen transmission rate for transmitting the data payload, with a higher MCS index allowing for a higher data rate. Hence, better PER performance may enable choosing a higher MCS index, which may result in a higher chosen data rate and goodput. Since the 802.11be baseline has the best PER performance, using it leads to the highest selected MCS index in general. This is followed by the 802.11be baseline with fixed $\psi$, as the PER performance using this method is close to the baseline. On the other end, since using the \textit{SCP} method results in the worst PER performance amongst all for the $8\times 2$ MIMO case, it also leads to the selection of the lowest MCS index. 

\begin{figure}[t] 
    \centering\vspace{-3mm}
    \includegraphics[width=0.9\linewidth]{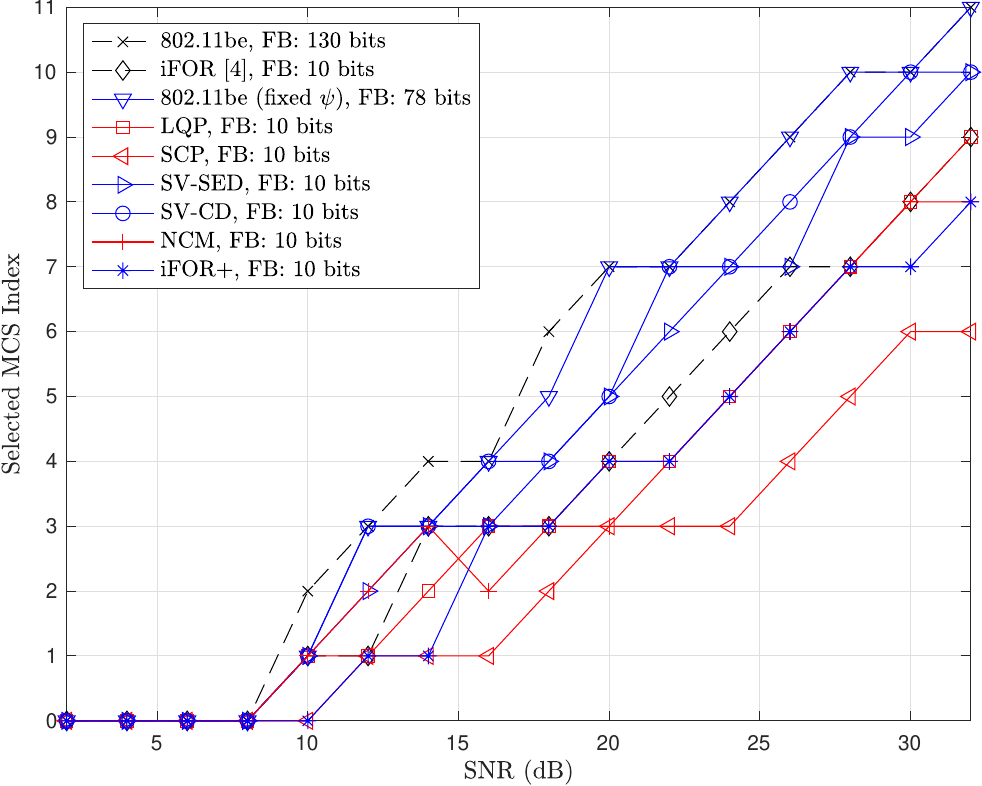}\vspace{-4mm}
    \caption{MCS index selection for PER threshold of $10^{-2}$.}
    \label{fig:mcs_selection}\vspace{-4mm}
\end{figure}

\subsection{Goodput Calculations}

Broadly, the goodput ($\Gamma$) of a link may be characterized as a measure of the effective or error-free data packet throughput. It may be considered as the ratio of the data payload in bits ($L_{\text{data}} $) transmitted in error-free packets and the total time ($T_{\text{total}}$) required to transmit this data.

Now, for a single Wi-Fi link, the total transmission duration is given by 
\begin{equation}
    T_{\text{total}} = T_{\text{sounding}} + T_{\text{data}} + T_{\text{SIFS}} + T_{\text{ACK}},
\end{equation}
where $T_{\text{sounding}}$ represents the  sounding duration, $T_{\text{data}}$  represents the data payload transmission duration, and $T_{\text{ACK}}$ is the time required to transmit the ACK message, as shown previously in Section~\ref{sec:sys_model}.

Considering the elements involved in channel sounding for an SU-MIMO case, the total sounding duration may be calculated as
\begin{equation}
     T_{\text{sounding}} = T_{\text{NDPA}} + T_{\text{SIFS}} + T_{\text{NDP}} + T_{\text{SIFS}} + T_{\text{BFR}}.
\end{equation}

For transmitting the BFR, the STA uses a MU-PPDU (see Fig.~\ref{fig:ppdu_format}) frame with the beamforming feedback information in the DATA field. Adhering to the same structure as the MU-PPDU, the transmission duration for the BFR can be computed as follows
\begin{equation}
    T_{\text{BFR}} = T_{\text{preamble}}+\frac{L_{\text{BFR}}}{R_{\text{BFR}}},
\end{equation}
where $L_{\text{BFR}}$  is the required number of bits to transmit the sounding feedback information and $R_{\text{BFR}}$  is the chosen data rate for transmitting this information. Similarly, for data transmission duration, 
\begin{equation}
    T_{\text{data}} = T_{\text{preamble}}+\frac{L_{\text{data}}}{R_{\text{data}}},
\label{eq:t_data}
\end{equation}
where $L_{\text{data}}$  is the required number of bits to transmit the payload data and $R_{\text{data}}$ is the chosen transmission data rate.

Now, coming back to the goodput calculations for the Wi-Fi link, the mathematical formula for goodput may be represented as follows  
\begin{equation}\label{eq:goodput_2}
    \Gamma = \frac{L_{\text{data}}}{T_{\text{sounding}} + T_{\text{data}}/(1-P_\text{e})  + T_{\text{SIFS}} + T_{\text{ACK}}},
\end{equation}
where $P_\text{e}$ represents the packet-error-rate. For the calculations in (\ref{eq:goodput_2}), we assume that the sounding procedure is not repeated in cases where data packet re-transmission is required. The values of the fixed parameters in the above calculations are listed in Table~\ref{tab:sim_parameters}.

\subsection{Goodput Comparison}

In this subsection, we discuss the goodput comparison of all the results presented in Fig.~\ref{fig:goodput}, where we plot the goodput derived in (\ref{eq:goodput_2}) against SNR. We discuss in Section~\ref{sec:mcs_discussion} that higher MCS index selections allow for higher data rates during the transmission of the data payload. Considering (\ref{eq:t_data}), we can say that a higher data rate would lead to lower data transmission duration leading to improvement in the goodput in (\ref{eq:goodput_2}). However, another critical parameter in the goodput calculations is $T_{\text{sounding}}$, which depends on the amount of information required in the BFR. We have seen that the 802.11be baseline has the best PER performance and the highest chosen data rate. However, the amount of bits ($130$ bits per subcarrier group) required in the BFR for the 802.11be baseline is high enough that the goodput gain is actually hampered and hence it has the worst goodput performance compared to all other methods we present, including iFOR. 

The 802.11be baseline with fixed $\psi$ chooses the data rates that are very close to the baseline, as can be inferred from Fig.~\ref{fig:mcs_selection}. The fixed $\psi$ method, however, offers a significant reduction in the number of bits required in the BFR (down to $78$ per subcarrier from $130$ in the baseline). Hence, overall, the fixed $\psi$ method provides a significant gain, of up to approximately $17\%$ in high SNR.  

For all the index-based methods, however, the number of bits required in the BFR is the same, since they all use $10$ bits of feedback per subcarrier group for these simulations. Moreover, since the target $P_\text{e}$ of our simulations is $10^{-2}$, minor differences in PER between the different index-based methods do not significantly affect the goodput in (\ref{eq:goodput_2}). The driving factor of the goodput results for the index-based methods is therefore the chosen data rate. Since the other index-based methods have a better PER performance than the \textit{SCP} method, they also exhibit better goodput performance in general. But because of a significant reduction in the BFR, the \textit{SCP} method still provides a high gain compared to the 802.11be baseline which is approximately $46\%$ at high SNR. It has been shown that the \textit{SV-CD} method provides the best PER results. Combined with the offered reduction in the bits of information required in the BFR, this method also provides the best goodput performance with a gain of approximately $54\%$ compared to the 802.11be baseline at high SNR.

\section{Conclusion and future work} \label{sec:conclusion}

In this paper, we proposed several machine learning-aided index-based methods for reducing beamforming feedback overhead in a Wi-Fi link and compare their performance to the IEEE 802.11be standard and our previous work in \cite{9860553}. 

\begin{figure}[t] 
    \centering\vspace{-3mm}
    \includegraphics[width=0.9\linewidth]{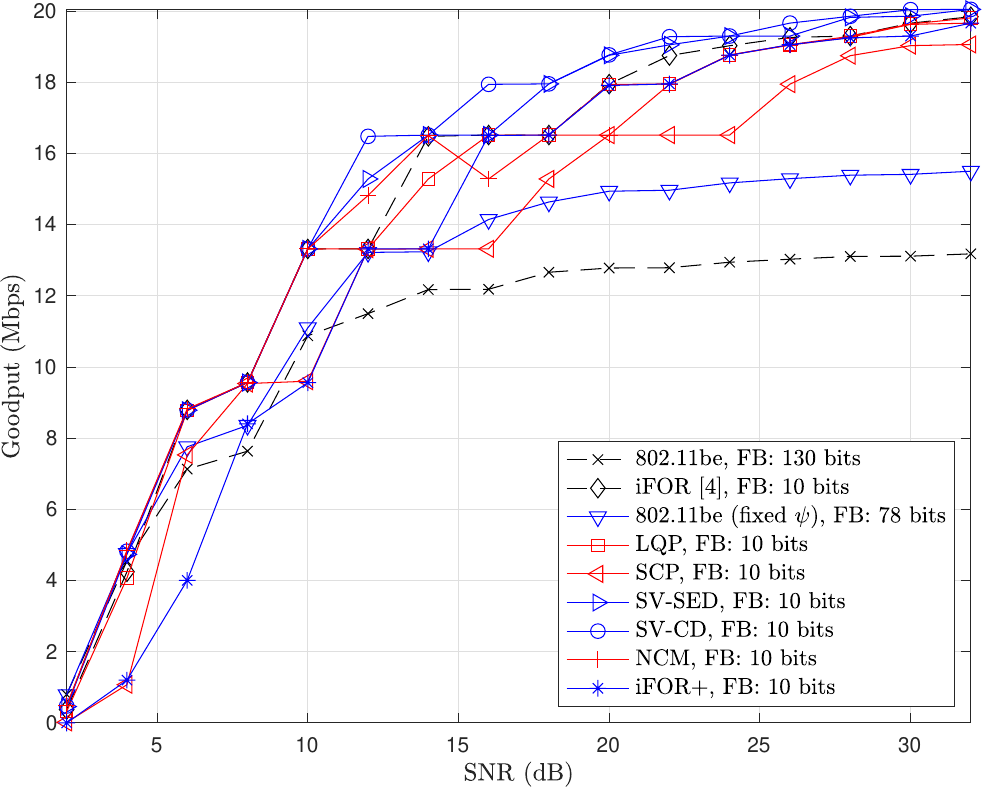}\vspace{-4mm}
    \caption{Goodput comparison for a payload of $1,000$ bytes.}
    \label{fig:goodput}\vspace{-4mm}
\end{figure}

Based on extensive simulation results, our analysis indicates that the representation of the data and the specific distance metric used for the candidate set generation have profound implications for the link performance. We show that, if the \textit{effective} distance between the $\phi$ angles is accounted for, there is an improvement in the PER performance. Whereas, if the quantization order for $\psi$ angles is changed or the $\phi$ and $\psi$ angles are clustered separately there is a degradation in the PER. Clustering over vectors containing the actual complex elements of the beamforming matrix, as in the \textit{SV-SED} and the \textit{SV-CD} methods, leads to the most efficient candidate sets that provide the best PER results amongst all the index-based methods. The \textit{SV-CD} method edges out \textit{SV-SED} due to the ability to capture the pairwise correlation between two vectors more accurately. Using candidates in the form of normalized covariance matrices leads to poor performance in the case of multiple spatial streams due to inefficient normalization, but matches the best PER performance (with that of \textit{SV-CD}) for the results using a single spatial stream.

Even though the proposed methods incur a loss in PER performance, the index-based low-overhead methods reduce the feedback overhead by a significant margin that enables improvement in the goodput performance of up to $54\%$ when compared to the goodput with the IEEE 802.11be baseline. This goodput improvement, coupled with the ease of implementation of the proposed methods makes them a viable alternative to the recent trend in literature of using NN-based approaches for CSI feedback. The relatively simpler implementation and significant performance gains offered by these index-based approaches are promising for future wireless networks, including Wi-Fi.

Moreover, we examine an additional method to transmit partial compressed feedback information in the BFR and the corresponding implications in the link performance. For this method, we use a simple $2\times 1$ MIMO example to demonstrate a lower bound on the GCS which indicates that the loss in accuracy incurred due to partial compressed beamforming information is not critical enough to hamper the PER, and in turn the goodput performance. We leave the detailed derivation for a generalized $M\times N$ MIMO case for our future work.

\bibliographystyle{IEEEtran}

\bibliography{ref}


\appendix[Proof of Lemma 1]
Here we derive the SED between two covariance matrices $\mathbf{K}_a$ and $\mathbf{K}_b$ and prove that it is inversely proportional to the inner product between $\mathbf{v}_a$ and $\mathbf{v}_b$. In the subsequent derivations, $\lVert.\rVert_\text{F}$ represents the Frobenius norm of the argument. For the case with a single spatial stream, $\lVert\mathbf{K}\rVert_\text{F}=\sqrt{(s^2)^2}=s^2$, since other matrices in (\ref{eq:norm_cov_mat}) are unitary. Now, the SED between $\mathbf{K}_a$ and $\mathbf{K}_b$ from (\ref{eq:d_cov_sed}) can be re-written as:
\begin{align}
    d_{\text{cov}} &= \sum^M_{i=1} \sum^M_{j=1} \Bigg| \frac{\mathbf{K}_a(i,j)}{\lVert\mathbf{K}_a\rVert_\text{F}} - \frac{\mathbf{K}_b(i,j)}{\lVert\mathbf{K}_b\rVert_\text{F}} \Bigg|^2  \label{eq:app_1} \\
	&= \frac{\lVert\mathbf{K}_a\rVert_\text{F}^2}{\lVert\mathbf{K}_a\rVert_\text{F}^2} + \frac{\lVert\mathbf{K}_b\rVert_\text{F}^2}{\lVert\mathbf{K}_b\rVert_\text{F}^2} - \sum^M_{i=1} \sum^M_{j=1} \Bigg( \frac{\mathbf{K}_a(i,j)^*\mathbf{K}_b(i,j)}{\lVert\mathbf{K}_a\rVert_\text{F}\lVert\mathbf{K}_b\rVert_\text{F}} \nonumber \\
	&\quad + \frac{\mathbf{K}_b(i,j)^*\mathbf{K}_a(i,j)}{\lVert\mathbf{K}_a\rVert_\text{F}\lVert\mathbf{K}_b\rVert_\text{F}} \Bigg) \nonumber \\
        &= 2 - \frac{1}{\lVert\mathbf{K}_a\rVert_\text{F}\lVert\mathbf{K}_b\rVert_\text{F}} \Bigg( \sum^M_{i=1} \sum^M_{j=1} \mathbf{K}_a(i,j)^*\mathbf{K}_b(i,j) \nonumber \\
	&\quad + \sum^M_{i=1} \sum^M_{j=1} \mathbf{K}_b(i,j)^*\mathbf{K}_a(i,j) \Bigg). \label{eq:app_2}
\end{align}

From (\ref{eq:norm_cov_mat}), $\mathbf{K}_a(i,j)=s_a^2v_{a,i}v_{a,j}^*$ and $\mathbf{K}_b(i,j)=s_b^2v_{b,i}v_{b,j}^*$ respectively. Substituting these in (\ref{eq:app_2}), we have
\begin{align}
    d_{\text{cov}} &= 2 - \frac{1}{\lVert\mathbf{K}_a\rVert_\text{F}\lVert\mathbf{K}_b\rVert_\text{F}} \Bigg( \sum^M_{i=1} \sum^M_{j=1} s_a^2v_{a,i}^*v_{a,j}s_b^2v_{b,i}v_{b,j}^* \nonumber \\
	&\quad + \sum^M_{i=1} \sum^M_{j=1} s_b^2v_{b,i}^*v_{b,j}s_a^2v_{a,i}v_{a,j}^* \Bigg) \nonumber \\
    &= 2 - \Bigg( \sum^M_{j=1}(v_{b,j}^*v_{a,j})\sum^M_{i=1}(v_{a,i}^*v_{b,i}) \nonumber \\
	&\quad + \sum^M_{i=1}(v_{b,i}^*v_{a,i})\sum^M_{j=1}(v_{a,j}^*v_{b,j}) \Bigg) \nonumber \\
    &= 2 - \Big( (\mathbf{v}^H_a\mathbf{v}_b)^H(\mathbf{v}^H_a\mathbf{v}_b) + (\mathbf{v}^H_a\mathbf{v}_b)^H(\mathbf{v}^H_a\mathbf{v}_b) \Big) \nonumber \\
	&= 2(1-|\mathbf{v}_a^H\mathbf{v}_b|^2). \label{eq:app_3}
\end{align}
\hfill $\qed$

\vfill

\end{document}